%

\documentclass[final,1p,times]{elsarticle}


\usepackage{amsmath,amssymb,amsfonts,amscd}
\usepackage[T1]{fontenc}

\usepackage{amsmath,amsthm,amssymb}

\usepackage{graphicx}

\def\R{\mathbb R}
\def\N{\mathbb N}
\def\C{\mathbb C}

\renewcommand{\d}{{\mathrm d}}
\def\D{\mathrm d} 
\renewcommand{\i}{{\mathrm i}}
\def\e{\mathrm e}
\def\O{\mathcal O}
\def\OO{\mathcal O}
\def\G{\mathcal{G}}
\def\GG{\mathcal{G}} 
\def\HH{\mathcal{H}}
\def\rank{\mathrm{rank}}

\renewcommand{\Re}{\mathrm{Re\,}}

\newtheorem{lem}{Lemma}[section]

\newtheorem{theorem}[lem]{Theorem}
\newtheorem{coro}[lem]{Corollary}

\newtheorem{conv}[lem]{Convention}

\theoremstyle{definition}

\newtheorem{rem}[lem]{Remark}

\begin{document}

\begin{frontmatter}

\title{Approximation of a general singular vertex coupling\\ in quantum graphs}
\author[label1]{Taksu Cheon} \ead{taksu.cheon@kochi-tech.ac.jp}
\author[label2,label3]{Pavel Exner} \ead{exner@ujf.cas.cz}
\author[label2,label4]{Ond\v{r}ej Turek} \ead{turekond@fjfi.cvut.cz}
\address[label1]{Laboratory of Physics, Kochi University of Technology\\
Tosa Yamada, Kochi 782-8502, Japan}
\address[label2]{Doppler Institute for Mathematical Physics and Applied
Mathematics, Czech Technical University\\
B\v rehov{\'a} 7, 11519 Prague, Czech Republic}
\address[label3]{Department of Theoretical Physics, Nuclear Physics
Institute, Czech Academy of Sciences\\ 
25068 \v{R}e\v{z} near Prague, Czech Republic}
\address[label4]{Department of Mathematics, Faculty of Nuclear Sciences and
Physical Engineering, Czech Technical University\\
Trojanova 13, 12000 Prague, Czech Republic}

\date{\today}

\begin{abstract}
The longstanding open problem of approximating all
singular vertex couplings in a quantum graph is solved. We present
a construction in which the edges are decoupled; an each pair of
their endpoints is joined by an edge carrying a $\delta$ potential
and a vector potential coupled to the ``loose'' edges by a
$\delta$ coupling. It is shown that if the lengths of the
connecting edges shrink to zero and the potentials are properly
scaled, the limit can yield any prescribed singular vertex
coupling, and moreover, that such an approximation converges in
the norm-resolvent sense.
\end{abstract}

\begin{keyword}
quantum graphs \sep boundary conditions \sep
singular vertex coupling \sep  quantum wires
\PACS 03.65.-w \sep 03.65.Db \sep 73.21.Hb

\end{keyword}

\end{frontmatter}

\section{Introduction}

While the origin of the idea to investigate quantum mechanics of
particles confined to a graph was conceived originally to address
to a particular physical problem, namely the spectra of aromatic
hydrocarbons \cite{RS53}, the motivation was quickly lost and for a
long time the problem remained rather an obscure textbook example.
This changed in the last two decades when the progress of
microfabrication techniques made graph-shaped structures of
submicron sizes technologically important. This generated an
intense interest to investigation of quantum graph models which went
beyond the needs of practical applications, since these models
proved to be an excellent laboratory to study various properties
of quantum systems. The literature on quantum graphs is nowadays
huge; we limit ourselves to mentioning the recent volume
\cite{EKST08} where many concepts are discussed and a rich
bibliography can be found.

The essential component of quantum graph models is the wavefunction
coupling in the vertices. While often the most simple matching
conditions (dubbed free, Kirchhoff, or Neumann) or the slightly more general $\delta$ coupling in which the functions are continuous in the vertex are used, these cases represent just a tiny subset
of all admissible couplings. The family of the latter is
determined by the requirement that the corresponding Hamiltonian
is a self-adjoint operator, or in physical language, that the
probability current is conserved at the vertices. It is not
difficult to find all the admissible conditions mathematically; if
the vertex joins $n$ edges they contain $n^2$ free parameters, and
with exception of the one-parameter subfamily mentioned above they
are all singular in the sense that the wavefunctions are
discontinuous at the vertex.

What is much less clear is the physical meaning of such
conditions. It is longstanding open problem whether and in what
sense one can approximate all the singular couplings by regular
ones depending on suitable parameters, and the aim of the present
paper is to answer this question by presenting such a
construction, minimal in a natural sense using $n^2$ real
parameters, and to show that the closeness is achieved in the
norm-resolvent sense, so the convergence of all types of the
spectra and the corresponding eigenprojections is guaranteed.

The key idea comes from a paper of one of us with Shigehara
\cite{CS98} which showed that a combination of regular point
interactions on a line approaching each other with the coupling
scaled in a particular way w.r.t. the interaction distance can
produce a singular point interaction. Later it was demonstrated
\cite{ENZ01} that the convergence in this model is norm-resolvent
and the scaling choice is highly non-generic. The idea was applied
by two of us to the simplest singular coupling, the so-called
$\delta'_\mathrm{s}$, in \cite{CE04} and was demonstrated to work; the
question was how much it can be extended. Two other of us examined
it \cite{ET07} and found that with a larger number of regular
interactions one can deal with families described by $2n$
parameters, and changing locally the approximating graph topology
one can deal with all the couplings invariant with respect to the time
reversal which form an ${n+1\choose 2}$-parameter subset.

It was clear that to proceed beyond the time-reversal symmetry one
has to involve vector potentials similarly as it is was done in
the simplest situation in \cite{SMMC99}. In this paper we present
such a construction which contains parameters breaking the
symmetry and which at the same time is more elegant than that of
\cite{ET07} in the sense that the needed ``ornamentation'' of the
graph is minimal: we disconnect the $n$ edges at the vertex and
join each pair of the so obtained free ends by an additional edge
which shrinks to a point in the limit. The number of parameters
leans on the decomposition $n^2 = n + 2 {n\choose 2}$, where the first summand, $n$,
corresponds to $\delta$ couplings of the ``outer'' edge endpoints
with those of the added shrinking ones. The second summand can be considered as ${n\choose 2}$ times
two parameters: one is a $\delta$ potential placed at the edge,
the other is a vector potential supported by it.

Our result shows that any singular vertex coupling can be
approximated by a graph in which the vertex is replaced by a local
graph structure in combination with local regular interactions and
local magnetic fields. This opens way to constructing
``structured'' vertices tailored to the desired conductivity
properties, even tunable ones, if the interactions are controlled
by gate electrodes, however, we are not going to elaborate such
approximations further in this paper.

We have to note for completeness that the problem of understanding
vertex couplings has also other aspects. The approximating object
needs not to be a graph but can be another geometrical structure. A
lot of attention was paid to the situation of ``fat graphs'', or
networks of this tubes built around the graph skeleton. The two
approaches can be combined, for instance, by ``lifting'' the graph
results to fat graphs. In this way approximations to $\delta$ and
$\delta'_\mathrm{s}$ couplings by suitable families of
Schr\"odinger operators on such manifolds with Neumann boundaries
were recently demonstrated in \cite{EP08}. The results of this
paper can be similarly ``lifted'' to manifolds; that will be the
subject of a subsequent work.

Let us review briefly the contents of the paper. In the next
section we gather the needed preliminary information. We review
the information about vertex couplings and derive a new
parametrization of a general coupling suitable for our purposes.
In Section~3 we describe in detail the approximation sketched
briefly above and show that on a heuristic level it converges to a
chosen vertex coupling. Finally, in the last section we present
and prove our main result showing that the said convergence is not
only formal but it is valid also in the norm-resolvent sense.

\section{Vertex coupling in quantum graphs}

Let us first recall briefly a few basic notions; for a more
detailed discussion we refer to the literature given in the
introduction. The object of our interest are Schr\"odinger
operators on metric graphs. A graph is conventionally identified
with a family of vertices and edges; it is metric if each edge can
be equipped with a distance, i.e. to be identified with a finite
or semi-infinite interval.

We regard such a graph $\Gamma$ with edges $E_1,\dots,E_n$ as a
configuration space of a quantum mechanical system, i.e. we
identify the orthogonal sum $\HH= \bigoplus_{j=1}^n L^2(E_j)$ with
the state Hilbert space and the wave function of a spinless
particle ``living'' on $\Gamma$ can be written as the column
$\Psi=(\psi_1, \psi_2,\ldots,\psi_n)^T$ with $\psi_j\in L^2(E_j)$.
In the simplest case when no external fields are present the
system Hamiltonian acts as $(H_\Gamma \Psi)_j= -\psi''_j$, with
the domain consisting of functions from $W^{2,2}(\Gamma) :=
\bigoplus_{j=1}^n W^{2,2}(E_j)$. Not all such functions are
admissible, though, in order to make the operator self-adjoint we
have to require that appropriate boundary conditions are satisfied
at the vertices of the graph.

We restrict our attention to the physically most interesting case
when the boundary conditions are \emph{local}, coupling values of
the functions and derivatives is each vertex separately. Our aim
is explain the meaning of a general vertex coupling using suitable
approximations; the local character means that we can investigate
how such a system behaves in the vicinity of a single vertex. A
prototypical example of this situation is a \emph{star graph} with
one vertex in which a finite number of semi-infinite edges meet;
this is the case we will mostly have in mind in the following.

Let us thus consider a graph vertex $V$ of degree $n$, i.e. with
$n$ edges connected at $V$. We denote these edges by
$E_1,\ldots,E_n$ and the components of the wave function values at
them by $\psi_1(x_1),\ldots,\psi_n(x_n)$. We choose the
coordinates at the edges in such a way that $x_j\geq0$ for all
$j=1,\ldots,n$, and the value $x_j=0$ corresponds to the vertex
$V$. For notational simplicity we put $\Psi_V=(\psi_1(0),
\ldots,\psi_n(0))^T$ and $\Psi'_V=(\psi'_1(0), \ldots,
\psi'_n(0))^T$. Since our Hamiltonian is a second-order
differential operator, the sought boundary conditions will couple
the above boundary values, their most general form being
\begin{equation}\label{1}
A\Psi_V+B\Psi'_V=0\,,
\end{equation}
where $A$ and $B$ are complex $n\times n$ matrices.

To ensure self-adjointness of the Hamiltonian, which is in physical
terms equivalent to conservation of the probability current at the
vertex $V$, the matrices $A$ and $B$ cannot be arbitrary but have
to satisfy the following two conditions,
\begin{equation}\label{KS}
\begin{split}
\bullet \quad & \mathrm{rank}(A|B)=n,\\
\bullet \quad & \text{the matrix $AB^*$ is self-adjoint},
\end{split}
\end{equation}
where $(A|B)$ denotes the $n\times2n$ matrix with $A,B$ forming
the first and the second $n$ columns, respectively, as stated for
the first time by Kostrykin and Schrader \cite{KS99}. The relation
\eqref{1} together with conditions \eqref{KS} (for brevity, we
will write \eqref{1}\&\eqref{KS}) describe all possible vertex
boundary conditions giving rise to a self-adjoint Hamiltonian; we
will speak about \emph{admissible boundary conditions}.

On the other hand, it is obvious that the formulation
\eqref{1}\&\eqref{KS} is non-unique in the sense that different
pairs $(A_1,B_1)$, $(A_2,B_2)$ may define the same vertex
coupling, as $A,B$ can be equivalently replaced by $CA,CB$ for any
regular matrix $C\in\C^{n,n}$. To overcome this ambiguity, Harmer
\cite{Ha00}, and independently Kostrykin and Schrader \cite{KS00}
proposed a unique form of the boundary conditions \eqref{1},
namely
\begin{equation}\label{U}
(U-I)\Psi_V+\i(U+I)\Psi'_V=0\,,
\end{equation}
where $U$ is a unitary $n\times n$ matrix. Note that in a more
general context such conditions were known before \cite{GG91}, see also \cite{FT00}.

The natural \emph{parametrization} \eqref{U} of the family of
vertex couplings has several advantages in comparison to
\eqref{1}\&\eqref{KS}, besides its uniqueness it also makes
obvious how ``large'' the family is: since the unitary group
$U(n)$ has $n^2$ real parameters, the same is true for vertex
couplings in a quantum graph vertex of the degree $n$. Of course,
this fact is also clear if one interprets the couplings from the
viewpoint of self-adjoint extensions \cite{ES89}.

On the other hand, among the disadvantages of the formulation
\eqref{U} one can mention its complexity: vertex couplings that
are simple from the physical point of view may have a complicated
description when expressed in terms of the condition \eqref{U}. As
an example, let us mention in the first place the
$\delta$-coupling with a parameter $\alpha\in\R$, characterized by
relations
\begin{equation}\label{delta}
\psi_j(0)=\psi_k(0)=:\psi(0)\,, \quad j,k=1\ldots,n\,, \qquad
\sum^{n}_{j=1}\psi_j'(0)=\alpha\psi(0)\,,
\end{equation}
for which the matrix $U$ used in \eqref{U} has entries given by
\begin{equation}\label{Udelta}
U_{jk}=\frac{2}{n+\i\alpha}-\delta_{jk}\,,
\end{equation}
$\delta_{jk}$ being the Kronecker delta. When we substitute
\eqref{Udelta} into \eqref{U} and compare with \eqref{delta}
rewritten into a matrix form~\eqref{1}, we observe that the first
formulation is not only more complicated with respect to the
latter, but also contains complex values whereas the latter does
not. This is a reason why it is often better to work with simpler
expressions of the type \eqref{1}\&\eqref{KS}. Another aspect of
this parametrization difference concerns the meaning of the
parameters. Since the $n^2$ ones mentioned earlier are
``encapsulated'' in a unitary matrix, it is difficult to
understand which role each of them plays.

On the other hand, both formulations \eqref{1}\&\eqref{KS} and
\eqref{U} have a common feature, namely that they have a form
insensitive to a particular edge numbering. If the edges are
permuted one has just to replace the matrices $A,B$ and $U$ by
$\tilde{A},\tilde{B}$ and $\tilde{U}$, respectively, obtained by
the appropriate rearrangement of rows and columns. This may hide
different ways in which the edges are coupled; it is easy to see
that a particular attention should be paid to ``singular''
situations when the matrix $U$ has eigenvalue(s) equal to $\pm 1$.

Since the type of the coupling will be important for the
approximation we are going to construct, we will rewrite the vertex
coupling conditions in another form which is again simple and
unique but requires an appropriate edge numbering. This will be
done in Theorem~\ref{volneOP}, before stating it we introduce several
symbols that will be employed in the further text, namely
\begin{align*}
\C^{k,l}\quad -&\quad \text{the set of complex matrices with $k$
rows and $l$ columns,} \\
\hat{n}\quad -&\quad \text{the set $\{1,2,\ldots,n\}$,}\\
I^{(n)}\quad -&\quad \text{the identity matrix $n\times n$.}
\end{align*}
To be precise, let us remark that the term ``numbering'' with
respect to the edges connected in the graph vertex of the degree $n$
means strictly numbering by the elements of the set $\hat{n}$.

\begin{theorem}\label{volneOP}
Let us consider a quantum graph vertex $V$ of the degree $n$.
\begin{itemize}
\item[(i)] If $m\leq n$, $S\in\C^{m,m}$ is a self-adjoint matrix
and $T\in\C^{m,n-m}$, then the equation
\begin{equation}\label{ST}
\left(\begin{array}{cc}
I^{(m)} & T \\
0 & 0
\end{array}\right)\Psi'_V=
\left(\begin{array}{cc}
S & 0 \\
-T^* & I^{(n-m)}
\end{array}\right)\Psi_V
\end{equation}
expresses admissible boundary conditions. This statement holds
true for any numbering of the edges.
\item[(ii)] For any vertex coupling there exist a number $m\leq n$
and a numbering of edges such that the coupling is described by
the boundary conditions \eqref{ST} with the uniquely given
matrices $T\in\C^{m,n-m}$ and self-adjoint $S\in\C^{m,m}$.
\item[(iii)] Consider a quantum graph vertex of the degree $n$ with
the numbering of the edges explicitly given; then there is a
permutation $\Pi\in S_n$ such that the boundary conditions may be
written in the modified form
\begin{equation}\label{Coupling}
\left(\begin{array}{cc}
I^{(m)} & T \\
0 & 0
\end{array}\right)\tilde{\Psi}_V'=
\left(\begin{array}{cc}
S & 0 \\
-T^* & I^{(n-m)}
\end{array}\right)\tilde{\Psi}_V
\end{equation}
for
\begin{equation*}
\tilde{\Psi}_V=\left(\begin{array}{c}
\psi_{\Pi(1)}(0)\\
\vdots\\
\psi_{\Pi(n)}(0)
\end{array}\right)\,,
\qquad \tilde{\Psi}'_V=\left(\begin{array}{c}
\psi_{\Pi(1)}'(0)\\
\vdots\\
\psi_{\Pi(n)}'(0)
\end{array}\right)\,,
\end{equation*}
where the self-adjoint matrix $S\in\C^{m,m}$ and the matrix
$T\in\C^{m,n-m}$ depend unambiguously on $\Pi$. This formulation
of boundary conditions is in general not unique, since there may
be different admissible permutations $\Pi$, but one can make it
unique by choosing the lexicographically smallest permutation
$\Pi$.
\end{itemize}
\end{theorem}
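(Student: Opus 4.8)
The plan is to build everything on the standard parametrization \eqref{U} and to exploit the fact that the ``singular'' directions of the coupling are exactly the eigenspaces of $U$ for the eigenvalue $-1$. First I would record the general linear-algebra fact underlying the whole statement: the boundary conditions \eqref{1}\&\eqref{KS} are equivalent to demanding that $\bigl(\begin{smallmatrix}\Psi_V'\\ \Psi_V\end{smallmatrix}\bigr)$ lies in a fixed Lagrangian subspace $\mathcal L\subset\C^{2n}$ (with respect to the symplectic form encoded by the current), and that conversely every such Lagrangian $\mathcal L$ arises this way; this is just a restatement of Kostrykin--Schrader together with the self-adjoint-extension picture alluded to after \eqref{ES89}. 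So parts (i) and (ii) amount to showing that the particular pair $(A,B)$ appearing in \eqref{ST}, namely
\[
A=\begin{pmatrix}S&0\\-T^*&I^{(n-m)}\end{pmatrix},\qquad
B=\begin{pmatrix}I^{(m)}&T\\0&0\end{pmatrix},
\]
satisfies \eqref{KS}, and that, after a suitable permutation of edges, \emph{every} Lagrangian subspace is the graph of such a pair for a unique $(m,S,T)$.

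For part (i), verifying the rank condition is immediate because the first $m$ columns of $B$ together with the last $n-m$ columns of $A$ are manifestly independent, so $\rank(A|B)=n$. For the second condition I would compute
\[
AB^*=\begin{pmatrix}S&0\\-T^*&I^{(n-m)}\end{pmatrix}\begin{pmatrix}I^{(m)}&0\\T^*&0\end{pmatrix}
=\begin{pmatrix}S&0\\-T^*+T^*&0\end{pmatrix},
\]
wait—one must be careful with $B^*$; I would write $B^*=\bigl(\begin{smallmatrix}I^{(m)}&0\\ T^*&0\end{smallmatrix}\bigr)$ and carry out the multiplication honestly, obtaining a block matrix whose self-adjointness reduces precisely to $S=S^*$, which is assumed. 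Since these two checks do not reference the edge numbering at all, the ``for any numbering'' clause is free.

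The substantive part is (ii), and this is where I expect the main obstacle. The idea is: given the coupling, pick the subspace $\mathcal M:=B\,\C^n = \operatorname{ran}B\subset\C^n$ (equivalently the orthogonal complement of $\ker U$-type data), set $m:=\dim\mathcal M$, and choose the edge numbering so that the first $m$ coordinate vectors $e_1,\dots,e_m$ form a basis adapted to $\mathcal M$—more precisely so that the projection of $\mathcal M$ onto the first $m$ coordinates is an isomorphism. Such a numbering exists because any $m$-dimensional subspace of $\C^n$ has some $m$-element coordinate subset onto which it projects isomorphically. After this normalization one can left-multiply $(A|B)$ by a regular $C$ (allowed, as noted before the theorem) to bring $B$ into the echelon shape $\bigl(\begin{smallmatrix}I^{(m)}&T\\0&0\end{smallmatrix}\bigr)$; then the rank condition forces the lower-left $(n-m)\times m$ block of $A$ together with structure of its lower-right block, and the self-adjointness condition $AB^*=BA^*$, written out in blocks, pins down $A$ to be exactly $\bigl(\begin{smallmatrix}S&0\\-T^*&I^{(n-m)}\end{smallmatrix}\bigr)$ with $S=S^*$, while simultaneously giving uniqueness of $S$ and $T$ once $m$ and the numbering are fixed. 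The delicate points I would watch: (a) showing $m$ is an invariant of the coupling, not of the chosen representative—this follows since $m=\rank B$ is unchanged under $B\mapsto CB$; (b) checking that the block equations genuinely determine the zero blocks in the top-right of $A$ and the identity in the bottom-right, rather than merely constraining them; here the interplay of \emph{both} conditions in \eqref{KS} is essential, and I would do the block computation of $AB^*-BA^*=0$ carefully.

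Finally, part (iii) is a bookkeeping corollary of (ii): given an externally fixed numbering, (ii) supplies \emph{some} numbering realizing \eqref{ST}; the passage between the two numberings is a permutation $\Pi\in S_n$, and rewriting \eqref{ST} in terms of $\tilde\Psi_V=(\psi_{\Pi(1)}(0),\dots,\psi_{\Pi(n)}(0))^T$ gives \eqref{Coupling}. Non-uniqueness of $\Pi$ can occur only through the freedom in choosing which $m$ coordinates $\mathcal M$ projects onto isomorphically (and any reshuffling within the first $m$ or within the last $n-m$ that preserves the block form); selecting the lexicographically least $\Pi$ among the admissible ones removes this ambiguity by fiat, and since the set of admissible permutations is finite and nonempty, the minimum exists. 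I would close by noting that $S,T$ depend on $\Pi$ only through which coordinates are designated ``first $m$'' and in what order, so once $\Pi$ is fixed they are unambiguous, as claimed.
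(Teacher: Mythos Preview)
Your proposal is correct and matches the paper's approach almost step for step: direct verification of \eqref{KS} for (i); for (ii), set $m=\rank B$, permute columns so the first $m$ are independent, left-multiply to bring $B$ to $\bigl(\begin{smallmatrix}I^{(m)}&T\\0&0\end{smallmatrix}\bigr)$, and then read off the block structure of $A$ from the two Kostrykin--Schrader conditions (with one further left-multiplication to normalize the lower-right block of $A$); (iii) is the same bookkeeping corollary in both. One small correction to your sketch: in the actual computation it is the \emph{self-adjointness} of $AB^*$ that forces $\mathcal A_{21}=-\mathcal A_{22}T^*$ (tying the lower-left block to the lower-right), while the \emph{rank} condition is what makes $\mathcal A_{22}$ invertible and hence normalizable to $I^{(n-m)}$---you have these roles interchanged, but since you already flag point (b) as the place to compute carefully, this will sort itself out when you write it out.
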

\begin{proof}
The claim (iii) is an immediate consequence of (ii) using a
simultaneous permutation of elements in the vectors $\Psi_V$ and
$\Psi'_V$, so we have to prove the first two. As for (i), we have
to show that the vertex coupling~\eqref{1} with matrices
$$
A=\left(\begin{array}{cc}
-S & 0 \\
T^* & -I^{(n-m)}
\end{array}\right)
\quad\text{and}\quad
B=\left(\begin{array}{cc}
I^{(n)} & T \\
0 & 0
\end{array}\right)\,,
$$
conform with \eqref{KS}. We have
$$
\mathrm{rank}\left(\begin{array}{cccc}
-S & 0 & I^{(m)} & T \\
T^* & -I^{(n-m)} & 0 & 0
\end{array}\right)=
\mathrm{rank}\left(\begin{array}{cccc}
I^{(m)} & 0 & -S & T \\
0 & -I^{(n-m)} & T^* & 0
\end{array}\right)=n
$$
and
$$
\left(\begin{array}{cc}
-S & 0 \\
T^* & -I^{(n-m)}
\end{array}\right)\cdot
\left(\begin{array}{cc}
I^{(n)} & T \\
0 & 0
\end{array}\right)^*=
\left(\begin{array}{cc}
-S & 0 \\
0 & 0
\end{array}\right)\,;
$$
the latter matrix is self-adjoint since $S=S^*$, thus \eqref{KS}
is satisfied.

Now we proceed to (ii). Consider a quantum graph vertex of the degree
$n$ with an arbitrary fixed vertex coupling. Let $\Psi_V$ and
$\Psi'_V$ denote the vectors of values and derivatives of the wave
function components at the edge ends; the order of the components
is arbitrary but fixed and the same for both vectors. We know that
the coupling can be described by boundary conditions (\ref{1})
with some $A,B\in\C^{n,n}$ satisfying~\eqref{KS}. Our aim is to
find a number $m\leq n$, a certain numbering of the edges and
matrices $S$ and $T$ such that the boundary conditions~\eqref{1}
are equivalent to~\eqref{ST}. Moreover, we have to show that such
a number $m$ is the only possible and that $S,T$ depend uniquely
on the edge numbering.

When proceeding from \eqref{1} to \eqref{ST}, we may use
exclusively manipulations that do not affect the meaning of the
coupling, namely
\begin{itemize}
\setlength{\itemsep}{-3pt}
\item simultaneous permutation of
columns of the matrices $A,B$ combined with corresponding
simultaneous permutation of components in $\Psi_V$ and $\Psi'_V$,
\item multiplying the system from left by a regular matrix.
\end{itemize}
We see from \eqref{ST} that $m$ is equal to the rank of the matrix
applied at $\Psi'_V$. We observe that the rank of this matrix, as
well as of that applied at $\Psi_V$, is not influences by any
of the manipulations mentioned above, hence it is obvious that
$m=\mathrm{rank}(B)$ and that such a choice is the only possible,
i.e. $m$ is unique.

Since $\mathrm{rank}(B)=m$ with $m\in\{0,\ldots,n\}$, there is an
$m$-tuple of linearly independent columns of the matrix $B$;
suppose that their indices are $j_1,\ldots,j_m$. We permute
simultaneously the columns of $B$ and $A$ so that those with
indices $j_1,\ldots,j_m$ are now at the positions $1,\ldots,m$,
and the same we do with the components of the vectors $\Psi_V$,
$\Psi'_V$. Labelling the permuted matrices $A,B$ and vectors
$\Psi_V$, $\Psi'_V$ with tildes, we get
\begin{equation}\label{1vln}
\tilde{A}\tilde{\Psi}_V+\tilde{B}\tilde{\Psi}'_V=0\,.
\end{equation}
Since $\rank(\tilde{B})=\rank(B)=m$, there are $m$ rows of
$\tilde{B}$ that are linearly independent, let their indices be
$i_1,\ldots,i_m$, and $n-m$ rows that are linear combinations of
the preceding ones. First we permute the rows in~\eqref{1vln} so
that those with indices $i_1,\ldots,i_m$ are put to the positions
$1,\ldots,m$; note that it corresponds to a matrix multiplication
of the whole system~\eqref{1vln} by a permutation matrix (which is
regular) from the left, i.e. an authorized manipulation. In this
way we pass from $\tilde{A}$ and $\tilde{B}$ to matrices which we
denote as $\check{A}$ and $\check{B}$; it is obvious that this
operation keeps the first $m$ columns of the matrix $\check{B}$
linearly independent.

In the next step we add to each of the last $n-m$ rows of
$\check{A}\tilde{\Psi}(0)+\check{B}\tilde{\Psi}'(0)=0$ such a
linear combination of the first $m$ rows that all the last $n-m$
rows of $\check{B}$ vanish. This is possible, because the last
$n-m$ lines of $\check{B}$ are linearly dependent on the first $m$
lines. It is easy to see that it is an authorized operation, not
changing the meaning of the boundary conditions; the resulting
matrices at the LHS will be denoted as $\hat{B}$ and $\hat{A}$,
i.e.
\begin{equation}\label{1hat}
\hat{A}\tilde{\Psi}_V+\hat{B}\tilde{\Psi}'_V=0\,.
\end{equation}
From the construction described above we know that the matrix
$\hat{B}$ has a block form,
\begin{equation*}
\hat{B}=\left(\begin{array}{cc}
\hat{\mathcal{B}}_{11} & \hat{\mathcal{B}}_{12} \\
0 & 0
\end{array}\right)\,,
\end{equation*}
where $\hat{\mathcal{B}}_{11}\in\C^{m,m}$ and $\hat{\mathcal{B}}_{12}
\in\C^{m,n-m}$; the square matrix $\hat{\mathcal{B}}_{11}\in\C^{m,m}$ is
regular, because its columns are linearly independent. We proceed
by multiplying the system~\eqref{1hat} from the left by the matrix
$$
\left(\begin{array}{cc}
\hat{\mathcal{B}}_{11}^{-1} & 0 \\
0 & I^{(n-m)}
\end{array}\right)\,,
$$
arriving at boundary conditions
\begin{equation}\label{Vazba}
\left(\begin{array}{cc}
\mathcal{A}_{11} & \mathcal{A}_{12} \\
\mathcal{A}_{21} & \mathcal{A}_{22}
\end{array}\right)\tilde{\Psi}_V+
\left(\begin{array}{cc}
I^{(m)} & \mathcal{B}_{12} \\
0 & 0
\end{array}\right)\tilde{\Psi}'_V=0\,,
\end{equation}
where $\mathcal{B}_{12}=\hat{\mathcal{B}}_{11}^{-1}\hat{\mathcal{B}}_{12}$.

Boundary conditions~\eqref{Vazba} are equivalent to \eqref{1},
therefore they have to be admissible. In other words, the matrices
$\left(\begin{array}{cc}
\mathcal{A}_{11} & \mathcal{A}_{12} \\
\mathcal{A}_{21} & \mathcal{A}_{22}
\end{array}\right)$
and
$\left(\begin{array}{cc}
I^{(m)} & \mathcal{B}_{12} \\
0 & 0
\end{array}\right)$
have to satisfy both the conditions~\eqref{KS}, which we are now
going to verify. Let us begin with the second one. We have
\begin{equation*}
\left(\begin{array}{cc}
\mathcal{A}_{11} & \mathcal{A}_{12} \\
\mathcal{A}_{21} & \mathcal{A}_{22}
\end{array}\right)\cdot
\left(\begin{array}{cc}
I^{(m)} & 0 \\
\mathcal{B}_{12}^* & 0
\end{array}\right)=
\left(\begin{array}{cc}
\mathcal{A}_{11}+\mathcal{A}_{12}\mathcal{B}_{12}^* & 0 \\
\mathcal{A}_{21}+\mathcal{A}_{22}\mathcal{B}_{12}^* & 0
\end{array}\right)
\end{equation*}
and this matrix is self-adjoint if and only if
$\mathcal{A}_{11}+\mathcal{A}_{12}\mathcal{B}_{12}^*$ is self
adjoint and $\mathcal{A}_{21}+\mathcal{A}_{22}
\mathcal{B}_{12}^*=0$. We infer that $\mathcal{A}_{21}
=-\mathcal{A}_{22}\mathcal{B}_{12}^*$, hence
condition~\eqref{Vazba} acquires the form
\begin{equation}\label{Vazba1}
\left(\begin{array}{cc}
\mathcal{A}_{11} & \mathcal{A}_{12} \\
-\mathcal{A}_{22}\mathcal{B}_{12}^* & \mathcal{A}_{22}
\end{array}\right)\tilde{\Psi}_V+
\left(\begin{array}{cc}
I^{(m)} & \mathcal{B}_{12} \\
0 & 0
\end{array}\right)\tilde{\Psi}'_V=0\,.
\end{equation}
The first one of the conditions~\eqref{KS} says that
$$
\rank\left(\begin{array}{cccc}
\mathcal{A}_{11} & \mathcal{A}_{12} & I^{(m)} & \mathcal{B}_{12} \\
-\mathcal{A}_{22}\mathcal{B}_{12}^* & \mathcal{A}_{22} & 0 & 0
\end{array}\right)=n\,,
$$
hence $\rank\left(-\mathcal{A}_{22}\mathcal{B}_{12}^* |
\mathcal{A}_{22}\right)=n-m$. Since
$\left(-\mathcal{A}_{22}\mathcal{B}_{12}^* |
\mathcal{A}_{22}\right)=-\mathcal{A}_{22}\cdot\left(\mathcal{B}_{12}^*|
I^{(n-m)}\right)$ we obtain the condition $\rank(\mathcal{A}_{22})
=n-m$, i.e. $\mathcal{A}_{22}$ must be a regular matrix. It allows
us to multiply the equation~\eqref{Vazba1} from the left by the
matrix
$$
\left(\begin{array}{cc}
I^{(m)} & -\mathcal{A}_{12}\mathcal{A}_{22}^{-1} \\
0 & -\mathcal{A}_{22}^{-1}
\end{array}\right)\,,
$$
which is obviously well-defined and regular; this operation leads
to the condition
\begin{equation*}
\left(\begin{array}{cc}
\mathcal{A}_{11}+\mathcal{A}_{12}\mathcal{B}_{12}^* & 0 \\
\mathcal{B}_{12}^* & -I^{(n-m)}
\end{array}\right)\tilde{\Psi}_V+
\left(\begin{array}{cc}
I^{(m)} & \mathcal{B}_{12} \\
0 & 0
\end{array}\right)\tilde{\Psi}'_V=0\,.
\end{equation*}
If follows from our previous considerations that the square matrix
$\mathcal{A}_{11}+\mathcal{A}_{12}\mathcal{B}_{12}^*$ is
self-adjoint. If we denote it as $-S$, rename the block
$\mathcal{B}_{12}$ as $T$ and transfer the term containing
$\tilde{\Psi}'_V$ to the right hand side, we arrive at boundary
conditions
\begin{equation}\label{Vazba2}
\left(\begin{array}{cc}
I^{(m)} & T \\
0 & 0
\end{array}\right)\tilde{\Psi}'_V=
\left(\begin{array}{cc}
S & 0 \\
-T^* & I^{(n-m)}
\end{array}\right)\tilde{\Psi}_V\,.
\end{equation}
The order of components in $\tilde{\Psi}_V$ and $\tilde{\Psi}'_V$
determines just the appropriate numbering, in other words, the
vectors $\tilde{\Psi}_V$ and $\tilde{\Psi}'_V$ represent exactly
what we understood by $\Psi_V$ and $\Psi'_V$ in the formulation of
the theorem.

Finally, the uniqueness of the matrices $S$ and $T$ with respect
to the choice of the permutation $\Pi$ is a consequence of the
presence of the blocks $I^{(m)}$ and $I^{(n-m)}$. First of all,
the block $I^{(n-m)}$ implies that there is only one possible $T$,
otherwise the conditions for $\tilde{\psi}'_{m+1},\ldots,
\tilde{\psi}'_{n}$ would change, and next, the block $I^{(m)}$
together with the uniqueness of $T$ implies that there is only one
possible $S$, otherwise the conditions for $\tilde{\psi}_{1},
\ldots,\tilde{\psi}_{m}$ would change.
\end{proof}

\begin{rem}\label{redukcepar}
The expression~\eqref{Coupling} implies, in particular, that if
$B$ has not full rank, the number of real numbers parametrizing
the vertex coupling~\eqref{1} is reduced from $n^2$ to at most
$m(2n-m)=n^2-(n-m)^2$, where $m=\mathrm{rank}(B)$. Another
reduction can come from a lower rank of the matrix $A$.
\end{rem}
\begin{rem}
The procedure of permuting columns and applying linear
transformations to the rows of the system \eqref{1} has been done
with respect to the matrix $B$, but one can start by same right
from the matrix $A$ as well. In this way we would obtain similar
boundary conditions as \eqref{ST}, only the vectors $\Psi_V$ and
$\Psi'_V$ would be interchanged. Theorem \ref{volneOP} can be thus formulated with Equation \eqref{ST} replaced by
$$
\left(\begin{array}{cc}
I^{(m)} & T \\
0 & 0
\end{array}\right)\Psi_V=
\left(\begin{array}{cc}
S & 0 \\
-T^* & I^{(n-m)}
\end{array}\right)\Psi'_V\,.
$$
For completeness' sake we add that another possible forms of Equation \eqref{ST} in Theorem \ref{volneOP} are
$$
\left(\begin{array}{cc}
S & 0 \\
-T^* & I^{(n-m)}
\end{array}\right)\Psi_V+
\left(\begin{array}{cc}
I^{(m)} & T \\
0 & 0
\end{array}\right)\Psi'_V=0
$$
and
$$
\left(\begin{array}{cc}
I^{(m)} & T \\
0 & 0
\end{array}\right)\Psi_V+
\left(\begin{array}{cc}
S & 0 \\
-T^* & I^{(n-m)}
\end{array}\right)\Psi'_V=0\,;
$$
having the standardized form $A\Psi_V+B\Psi'_V=0$, last two formulations may be sometimes more convenient than \eqref{ST}.\\
Obviously, an analogous remark applies to Equation \eqref{Coupling}.
\end{rem}
\begin{rem}
A formulation of boundary conditions with a matrix structure
singling out the regular part as in \eqref{Coupling} has been
derived in a different way by P. Kuchment \cite{Ku04}. Recall that in the setting
analogous to ours he stated existence of an orthogonal projector
$P$ in $\C^n$ with the complementary projector $Q=Id-P$ and a
self-adjoint operator $L$ in $Q\C^n$
such that the boundary conditions may be written in the form
\begin{equation}\label{Kuchment}
\begin{array}{c}
P\Psi_V=0\\
Q\Psi'_V+LQ\Psi_V=0\,.
\end{array}
\end{equation}
Let us briefly explain how P. Kuchment's form differs from \eqref{Coupling}. When transformed into a matrix form, \eqref{Kuchment} consists of two groups of $n$ linearly dependent equations. If we then naturally extract a single group of $n$ linearly indepent ones, we arrive at a condition with a structure similar to \eqref{Vazba1}, i. e. the upper right submatrix standing at $\Psi'_V$ is generally a \emph{nonzero} matrix $m\times(n-m)$. In other words, whilst P. Kuchment aimed to decompose the boundary conditions with respect to two complementary orthogonal projectors, our aim was to obtain a unique matrix form with as many vanishing terms as possible; the form \eqref{ST} turned out to have a highly suitable structure for solving the problem of approximations that we are going to analyze in the rest of the paper.
\end{rem}

To conclude this introductory section, let us summarize main
advantages and disadvantages of the conditions \eqref{ST} and
\eqref{Coupling}. They are unique and exhibit a simple and clear
correspondence between the parameters of the coupling and the
entries of matrices in \eqref{ST}, furthermore, the matrices in
\eqref{ST} are relatively sparse. On the negative side, the
structure of matrices in \eqref{ST} depends on $\mathrm{rank}(B)$
and the vertex numbering is not fully permutable.

\section{The approximation arrangement}

We have argued above that due to a local character one can
consider a single-vertex situation, i.e. star graph, when asking
about the meaning of the vertex coupling. In this section we
consider such a quantum graph with general matching conditions and
show that the singular coupling may be understood as a limit case
of certain family of graphs constructed only from edges connected
by $\delta$-couplings, $\delta$-interactions, and supporting
constant vector potentials.

Following the above discussion, one may consider the boundary
conditions of the form \eqref{ST}, renaming the edges if
necessary. It turns out that for notational purposes it is
advantageous to adopt the following convention on a shift of the
column indices of $T$:
\begin{conv}\label{indT}
The lines of the matrix $T$ are indexed from 1 to $m$, the columns
are indexed from $m+1$ to $n$.
\end{conv}

Now we can proceed to the description of our approximating model.
Consider a star graph with $n$ outgoing edges coupled in a general
way given by the condition~\eqref{Coupling}. The approximation in
question looks as follows (cf.~Fig.1):


\begin{itemize}
\item We take $n$ halflines, each parametrized by
$x\in[0,+\infty)$, with the endpoints denoted as $V_j$, and put a
$\delta$-coupling (to the edges specified below) with the
parameter $v_j(d)$ at the point $V_j$ for all $j\in\hat{n}$.
\item Certain pairs $V_j,V_k$ of halfline
endpoints will be joined by edges of the length $2d$, and the
center of each such joining segment will be denoted as
$W_{\{j,k\}}$. For each pair $\{j,k\}$, the points $V_j$ and
$V_k$, $j\neq k$, are joined if one of the following three
conditions is satisfied (keep in mind Convention~\ref{indT}):
\begin{itemize}
\item[(1)] $j\in\hat{m}$, $k\geq m+1$, and $T_{jk}\neq0$
(or $j\geq m+1$, $k\in\hat{m}$, and $T_{kj}\neq0$),
\item[(2)] $j,k\in\hat{m}$ and
$(\exists l\geq m+1)(T_{jl}\neq0\wedge T_{kl}\neq0)$,
\item[(3)] $j,k\in\hat{m}$, $S_{jk}\neq0$, and the previous
condition is not satisfied.
\end{itemize}
\item At each point $W_{\{j,k\}}$ we place a $\delta$ interaction
with a parameter $w_{\{j,k\}}(d)$. From now on we use the
following convention: the connecting edges of the length $2d$ are
considered as composed of two line segments of the length $d$, on
each of them the variable runs from 0 (corresponding to the point
$W_{\{j,k\}}$) to $d$ (corresponding to the point $V_j$ or $V_k$).
\item On each connecting segment described above we put a vector
potential which is constant on the whole line between the points
$V_j$ and $V_k$. We denote the potential strength between the
points $W_{\{j,k\}}$ and $V_j$ as $A_{(j,k)}(d)$, and between the
points $W_{\{j,k\}}$ and $V_k$ as $A_{(k,j)}(d)$. It follows from
the continuity that $A_{(k,j)}(d)=-A_{(j,k)}(d)$ for any pair
$\{j,k\}$.
\end{itemize}
The choice of the dependence of $v_j(d)$, $w_{\{j,k\}}(d)$ and
$A_{(j,k)}(d)$ on the parameter $d$ is crucial for the
approximation and will be specified later.
\begin{figure}[!t]
\begin{center}
\includegraphics[width=5cm, keepaspectratio]{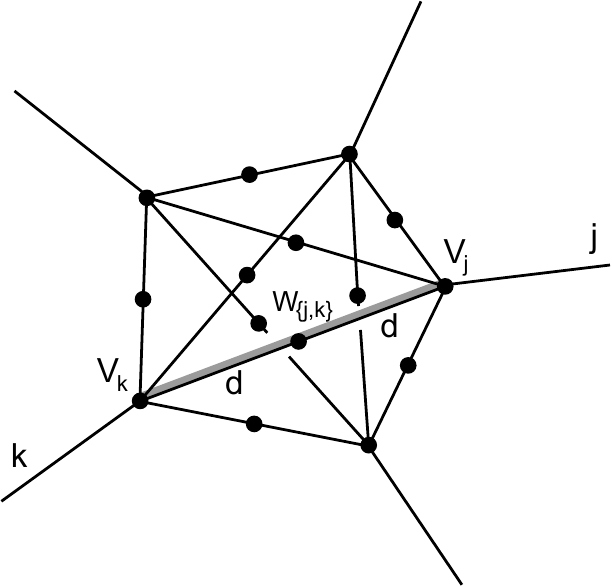}
\end{center}
\caption{The scheme of the approximation.  All inner links are of length $2d$.
Some connection links may be missing if the conditions given in the text 
are not satisfied. 
The quantities corresponding to the index pair $\{j,k\}$ are marked, and
the grey line symbolizes the vector potential $A_{(j,k)}(d)$.}
\end{figure}

It is useful to introduce the set $N_j\subset\hat{n}$ containing
indices of all the edges that are joined to the $j$-th one by a
connecting segment, i.e.
\begin{align}\label{Nj}
N_j=&\{k\in\hat{m}|\, S_{jk}\neq0\}\cup\{k\in\hat{m}|\,
  (\exists l\geq m+1)(T_{jl}\neq0\wedge T_{kl}\neq0)\} \nonumber \\
  &\cup\{k\geq m+1|\, T_{jk}\neq0\} \qquad \text{for } j\in\hat{m}\\
N_j=&\{k\in\hat{m}|\, T_{kj}\neq0\} \qquad\qquad\quad \text{for }
j\geq m+1 \nonumber
\end{align}
The definition of $N_j$ has these two trivial consequences, namely
\begin{gather}
k\in N_j\Leftrightarrow j\in N_k \label{Nj1} \\
j\geq m+1\Rightarrow N_j\subset\hat{m} \label{Nj2}
\end{gather}
\noindent For the wave function components on the edges we use the
following symbols:
\begin{itemize}
\item the wave function on the $j$-th half line is denoted by
$\psi_j$,
\item the wave function on the line  connecting points
$V_j$ and $V_k$ has two components: the one on the line between
$W_{\{j,k\}}$ and $V_j$ is denoted by $\varphi_{(j,k)}$, the one
on the half between the middle and the endpoint of the $k$-th half
line is denoted by $\varphi_{(k,j)}$. We remind once more the way
in which the variable $x$ of $\varphi_{(j,k)}$ and
$\varphi_{(k,j)}$ is considered: it grows from 0 at the point
$W_{\{j,k\}}$ to $d$ at the point $V_j$ or $V_k$, respectively.
\end{itemize}
Next we describe how the $\delta$ couplings involved look like;
for simplicity we will refrain from indicating in the boundary
conditions the dependence of the parameters $u,\, v_j,\,
w_{\{j,k\}}$ on the distance $d$.

The $\delta$ interaction at the edge connecting the $j$-th and
$k$-th half line (of course, for $j,k\in\hat{n}$ such that $k\in
N_j$ only) is expressed through the conditions
\begin{equation}\label{I.serie}
\begin{array}{c}
\varphi_{(j,k)}(0)=\varphi_{(k,j)}(0)=:\varphi_{\{j,k\}}(0)\,,
\\ \vspace{.5em}
\varphi_{(j,k)}'(0_+)+\varphi_{(k,j)}'(0_+)
=w_{\{j,k\}}\varphi_{\{j,k\}}(0)\,,
\end{array}
\end{equation}
the $\delta$ coupling at the endpoint of the $j$-th half line ($j\in\hat{n}$) means
\begin{equation}\label{II.serie}
\begin{array}{c}
\psi_j(0)=\varphi_{(j,k)}(d)\quad \text{for all}\; k\in N_j\,,
\\ \vspace{.5em}
\psi_j'(0)-\sum_{k\in
N_j}\varphi_{(j,k)}'(d)
=v_j\psi_j(0)\,.
\end{array}
\end{equation}
Further relations which will help us to find the parameter
dependence on $d$ come from Taylor expansion. Consider first the
case without any added potential,
\begin{eqnarray}\label{bez pot.}
&& \begin{aligned}
&\varphi_{(j,k)}(d)
=\varphi_{\{j,k\}}(0)+d\,
\varphi_{(j,k)}'(0)+\OO(d^2)\,,\\
&\varphi_{(j,k)}'(d)
=\varphi_{(j,k)}'(0)+\OO(d)\,, \quad j,k\in\hat{n}\,.
\end{aligned}
\end{eqnarray}

To take the effect of added vector potentials into account, the
following lemma will prove useful:
\begin{lem}\label{potencial}
Let us consider a line parametrized by the variable $x\in(0,L)$,
$L\in(0,+\infty)\cup\{+\infty\}$, and let $H$ denote a Hamiltonian
of a particle on this line interacting with a potential $V$,
\begin{equation}\label{H bez v.p.}
H=-\frac{\d^2}{\d x^2}+V\,,
\end{equation}
sufficiently regular to make $H$ self-adjoint. We denote by
$\psi^{s,t}$ the solution of $H\psi=k^2\psi$ with the boundary
values $\psi^{s,t}(0)=s$, ${\psi^{s,t}}'(0)=t$. Consider the same
system with a vector potential $A$ added, again sufficiently
regular; the Hamiltonian is consequently given by
\begin{equation}\label{H s v.p.}
H_A=\left(-\i\frac{\d}{\d x}-A\right)^2+V\,.
\end{equation}
Let $\psi_A^{s,t}$ denote the solution of $H_A\psi=k^2\psi$ with
the same boundary values as before, i.e. $\psi_A^{s,t}(0)=s$,
${\psi_A^{s,t}}'(0)=t$. Then the function $\psi_A^{s,t}$ can be
expressed as
$$
\psi_A^{s,t}(x)=\e^{\i\int_0^x A(z)\d z}\cdot\psi^{s,t}(x)\qquad
\text{for all}\quad x\in(0,L)\,.
$$
\end{lem}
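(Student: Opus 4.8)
The plan is to verify directly that the proposed formula $\psi_A^{s,t}(x)=\e^{\i\int_0^x A(z)\,\d z}\cdot\psi^{s,t}(x)$ defines the solution of the vector-potential problem with the prescribed boundary data, using the uniqueness of solutions of the second-order linear ODE $H_A\psi=k^2\psi$. In other words, it suffices to check two things: (a) the right-hand side, call it $\chi(x)$, satisfies the same initial conditions $\chi(0)=s$, $\chi'(0)=t$ as $\psi_A^{s,t}$; and (b) $\chi$ solves $H_A\chi=k^2\chi$. Since $H_A$ is self-adjoint (by the regularity assumption on $A$ and $V$) and the equation is a regular second-order linear ODE, the initial-value problem has a unique solution, so (a) and (b) together force $\chi=\psi_A^{s,t}$.

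For step (a), write $G(x)=\int_0^x A(z)\,\d z$, so $\chi=\e^{\i G}\psi^{s,t}$ with $G(0)=0$ and $G'(x)=A(x)$. Then $\chi(0)=\psi^{s,t}(0)=s$, and differentiating gives $\chi'=\e^{\i G}\big(\i A\,\psi^{s,t}+(\psi^{s,t})'\big)$, so $\chi'(0)=\i A(0)s+t$. Here one should note the convention that the ``derivative'' appearing in the boundary data for $H_A$ is understood in the same plain sense $\psi'(0)$ as used throughout the paper (not the covariant derivative), so that the matching is literally $\chi(0)=s$, $\chi'(0)=\i A(0)s+t$; in fact the cleanest route is to observe that the \emph{covariant} derivative $(-\i\d/\d x-A)\chi = \e^{\i G}(-\i\d/\d x-A)\psi^{s,t} = -\i\e^{\i G}(\psi^{s,t})'$ transforms by the same phase, which is exactly what makes the gauge transformation work; I would state the boundary-value convention explicitly to avoid the apparent mismatch at $x=0$.

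For step (b), the computation is the standard gauge-covariance identity: for any smooth $f$,
\begin{equation*}
\left(-\i\frac{\d}{\d x}-A\right)\big(\e^{\i G}f\big)=\e^{\i G}\left(-\i\frac{\d}{\d x}-A+G'-A\right)f\Big|_{\text{wait}}
\end{equation*}
— more carefully, $(-\i\d/\d x)(\e^{\i G}f)=\e^{\i G}(G' f-\i f')=\e^{\i G}(Af-\i f')$, so $(-\i\d/\d x-A)(\e^{\i G}f)=\e^{\i G}(-\i f')=-\i\e^{\i G}f'$. Applying the operator a second time, $(-\i\d/\d x-A)\big(-\i\e^{\i G}f'\big)=-\i\cdot(-\i)\e^{\i G}f''=-\e^{\i G}f''$. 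Hence $H_A(\e^{\i G}f)=-\e^{\i G}f''+V\e^{\i G}f=\e^{\i G}(-f''+Vf)=\e^{\i G}Hf$. Taking $f=\psi^{s,t}$ and using $H\psi^{s,t}=k^2\psi^{s,t}$ yields $H_A\chi=k^2\chi$, completing (b). By uniqueness for the ODE, $\chi=\psi_A^{s,t}$ on all of $(0,L)$.

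I do not expect any genuine obstacle here; the lemma is essentially the one-dimensional gauge transformation and the proof is a two-line computation plus an appeal to ODE uniqueness. The only point requiring care — and worth spelling out in the write-up — is the bookkeeping of what ``boundary value of the derivative'' means in the presence of $A$, i.e. making sure the intertwining relation $H_A\e^{\i G}=\e^{\i G}H$ is applied with consistent conventions so that the claimed identity holds verbatim as stated; once that is pinned down, the rest is immediate.
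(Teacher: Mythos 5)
Your proof is correct and follows essentially the same route as the paper's: a direct computation showing that multiplication by $\e^{\i\int_0^x A(z)\,\d z}$ intertwines $H$ and $H_A$ (you factor the first-order operator twice, the paper expands $H_A=-\frac{\d^2}{\d x^2}+2\i A\frac{\d}{\d x}+\i A'+A^2+V$ and cancels terms --- the same calculation), supplemented by the boundary-value check and an explicit appeal to ODE uniqueness. Your side remark about the derivative at $x=0$ is well taken: the paper dismisses the boundary matching with the phrase that ``the exponential factor involved is equal to one'', which settles $\chi(0)=s$ but overlooks the extra term $\i A(0)s$ in $\chi'(0)$, so the statement holds verbatim only when the derivative datum for $H_A$ is read covariantly (or $A(0)=0$); this shift is in fact what Corollary~\ref{Transformovana delta} subsequently accounts for when it replaces $\alpha$ by $\alpha+\i\sum_j A_j$.
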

\begin{proof}
The aim is to prove that
$$
-{\psi^{s,t}}''+V\psi^{s,t}=k^2\psi^{s,t} \quad\wedge\quad \psi^{s,t}(0)=s \quad\wedge\quad {\psi^{s,t}}'(0)=t
$$
implies
$$
\left(-\i\frac{\d}{\d x}-A\right)^2\left(\e^{\i\int_0^x A(z)\d
z}\cdot\psi^{s,t}\right)+V\cdot\e^{\i\int_0^x A(z)\d
z}\cdot\psi^{s,t}=k^2\e^{\i\int_0^x A(z)\d z}\cdot\psi^{s,t}
$$
and $\psi_A^{s,t}(0)=s$, ${\psi_A^{s,t}}'(0)=t$. The last part is
obvious, since the exponential factor involved is equal to one, hence it
suffices to prove the displayed relation. It is straightforward
that the Hamiltonian $H_A$ acts generally as
$$
H_A=-\frac{\d^2}{\d x^2}+2\i A\frac{\d}{\d x}+\i A'+A^2+V\,.
$$
We substitute $\e^{\i\int_0^x A(z)\d z}\cdot\psi^{s,t}$ for $\psi$,
obtaining
\begin{multline*}
\left[H_A\left(\e^{\i\int_0^x A(z)\d z}\cdot\psi^{s,t}\right)\right](x)
=-\frac{\d^2}{\d x^2}\left(\e^{\i\int_0^x A(z)\d z}\cdot\psi^{s,t}\right)(x)+\\
+2\i A(x)\frac{\d}{\d x}\left(\e^{\i\int_0^x A(z)\d z}\cdot\psi^{s,t}\right)(x)
+\left(\i A'(x)+A(x)^2+V(x)\right)\e^{\i\int_0^x A(z)\d z}\cdot\psi^{s,t}(x)\,.
\end{multline*}
Now we express the derivatives applying the formula $\frac{\d}{\d
x}\int_0^x A(z)\d z=A(x)$. Most of the terms then cancel, it
remains only
$$
\left[H_A\left(\e^{\i\int_0^x A(z)\d z}\cdot\psi^{s,t}\right)\right](x)=
\e^{\i\int_0^x A(z)\d z}\cdot
\left(-{\psi^{s,t}}''(x)+V(x)\cdot\psi^{s,t}(x)\right)\,.
$$
Due to the assumption $-{\psi^{s,t}}''+V\psi^{s,t}=k^2\psi^{s,t}$,
we have
$$
\left[H_A\left(\e^{\i\int_0^x A(z)\d z}\cdot\psi^{s,t}\right)\right](x)
=k^2\e^{\i\int_0^x A(z)\d z}\cdot\psi^{s,t}(x)\,,
$$
what we have set out to prove.
\end{proof}

The lemma says that adding a vector potential on an edge of a
quantum graph has a very simple effect of changing the phase of
the wave function by the value $\int_0^x A(z)\d z$. We will work
in this paper with the special case of constant vector potentials
on the connecting segments of the lengths $2d$, hence the phase
shift will be given here as a product of the value $A$ and the
length in question.

Lemma~\ref{potencial} has the following very useful consequence.
\begin{coro}\label{Transformovana delta}
Consider a quantum graph vertex with $n$ outgoing edges indexed by
$1,\ldots,n$ and parametrized by $x\in(0,L_j)$. Suppose that there
is a $\delta$ coupling with the parameter $\alpha$ at the vertex,
and moreover, that there is a constant vector potential $A_j$ on
the $j$-th edge for all $j\in\hat{n}$. Let $\psi_j$ denote the
wave function component on the $j$-the edge. Then the boundary
conditions acquire the form
\begin{gather}
\psi_j(0)=\psi_k(0)=:\psi(0) \qquad \text{for all} \quad j,k\in\hat{n}\,, \\
\sum_{j=1}^n\psi_j'(0)=\left(\alpha+\i\sum_{j=1}^n
A_j\right)\psi(0)\,,
\end{gather}
where $\psi_j(0)$, $\psi_j'(0)$, etc., stand for the one-sided
(right) limits at $x=0$.
\end{coro}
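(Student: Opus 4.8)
The plan is to obtain the statement as an immediate consequence of Lemma~\ref{potencial} via a local gauge transformation carried out edge by edge. Since every $A_j$ is constant, the primitive occurring in Lemma~\ref{potencial} is simply $\int_0^x A_j\,\d z=A_j x$, so on the $j$-th edge the solutions of the magnetic equation $H_{A_j}\psi_j=k^2\psi_j$ are exactly the functions $\psi_j(x)=\e^{\i A_j x}\tilde\psi_j(x)$ with $\tilde\psi_j$ a solution of the field-free equation $-\tilde\psi_j''=k^2\tilde\psi_j$; equivalently $\tilde\psi_j=\e^{-\i A_j x}\psi_j$, and this map is unitary on the $L^2$ space of the edge. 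The first step is to read off how the vertex data transform under this substitution: putting $x=0$ gives $\tilde\psi_j(0)=\psi_j(0)$ because the exponential equals $1$ there, while differentiating the product and then setting $x=0$ gives $\psi_j'(0)=\i A_j\tilde\psi_j(0)+\tilde\psi_j'(0)$, i.e. $\tilde\psi_j'(0)=\psi_j'(0)-\i A_j\psi_j(0)$.

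The second step is to pin down what ``$\delta$ coupling with the parameter $\alpha$'' means in the magnetic setting, namely the unique matching that makes the magnetic operator self-adjoint and reduces to \eqref{delta} when the potentials are switched off; in gauge-invariant form it is continuity of $\Psi$ at the vertex together with $\sum_{j=1}^n\bigl(\psi_j'(0)-\i A_j\psi_j(0)\bigr)=\alpha\psi(0)$. The point of invoking Lemma~\ref{potencial} is precisely that the unitaries $\psi_j\mapsto\tilde\psi_j$ preserve the values at the vertex and intertwine this magnetic coupling with the ordinary $\delta$ coupling \eqref{delta} of strength $\alpha$ written for the functions $\tilde\psi_j$, that is, $\tilde\psi_j(0)=\tilde\psi_k(0)=:\psi(0)$ and $\sum_{j=1}^n\tilde\psi_j'(0)=\alpha\psi(0)$.

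The last step merely combines the two observations. From $\tilde\psi_j(0)=\psi_j(0)$ the continuity requirement is unchanged, so $\psi_j(0)=\psi_k(0)=:\psi(0)$; and substituting $\tilde\psi_j'(0)=\psi_j'(0)-\i A_j\psi(0)$ into $\sum_j\tilde\psi_j'(0)=\alpha\psi(0)$ produces $\sum_{j=1}^n\psi_j'(0)-\i\bigl(\sum_{j=1}^nA_j\bigr)\psi(0)=\alpha\psi(0)$, which is exactly the asserted relation $\sum_{j=1}^n\psi_j'(0)=\bigl(\alpha+\i\sum_{j=1}^nA_j\bigr)\psi(0)$.

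I expect the only genuinely non-routine point to be the middle step: justifying that the coupling we call ``$\delta$ with parameter $\alpha$'' is the gauge-covariant one and that the phase factor of Lemma~\ref{potencial} really carries it onto the plain $\delta$ coupling, which amounts to keeping the signs in the exponent $\e^{\pm\i A_j x}$ and in the term $\i A_j\psi_j(0)$ consistent throughout. Everything after that is the one-line computation of the previous paragraph, and a general regular potential $V$ on the edges could be allowed verbatim, since Lemma~\ref{potencial} already covers that case.
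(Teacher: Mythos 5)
Your proposal is correct and follows essentially the same route as the paper: Lemma~\ref{potencial} gives the gauge factor $\e^{\i A_j x}$ edge by edge, the boundary data transform as $\tilde\psi_j(0)=\psi_j(0)$ and $\tilde\psi_j'(0)=\psi_j'(0)-\i A_j\psi_j(0)$, and substitution into the plain $\delta$ conditions \eqref{delta} yields the claim. The paper makes the same (implicit) identification of the magnetic $\delta$ coupling as the one satisfied by the gauge-transformed, field-free components, so your ``middle step'' matches its reading of the hypothesis.
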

\noindent In other words the effect of the vector potentials on
the boundary conditions corresponding to a ``pure'' $\delta$
coupling is the following:
\begin{itemize}
\item the continuity is not affected,
\item the coupling parameter is changed from $\alpha$ to
$\alpha+\i\sum_{j=1}^n A_j$.
\end{itemize}
\begin{proof}
Consider first the situation without any vector potentials. If
$\psi^0_j$, $j\in\hat{n}$, denote the wave function components
corresponding to this case, the boundary conditions expressing the
$\delta$ coupling have the form~\eqref{delta}, i.e.
\begin{equation}\label{delta1}
\begin{array}{c}
\psi^0_j(0)=\psi^0_k(0)=:\psi^0(0) \qquad \text{for all} \quad j,k\in\hat{n}\,,
\\ [.5em]
\sum_{j=1}^n{\psi^0_j}'(0)=\alpha\psi^0(0)\,.
\end{array}
\end{equation}
If there are vector potentials on the edges, $A_j$ on the $j$-th
edge, one has in view of the previous lemma, $\psi_j(x)=\e^{\i A_j
x}\psi^0_j(x)$, i.e.
\begin{gather*}
\psi^0_j(x)=\e^{-\i A_j x}\psi_j(x)\,,\\
{\psi^0_j}'(x)=\frac{\d}{\d x}\left(\e^{-\i A_j x}\psi_j(x)\right)
=\e^{-\i A_j x}{\psi_j}'(x)-\i A_j\cdot\e^{-\i A_j x}\psi_j(x)\,.
\end{gather*}
Thence we express $\psi^0_j(0)$ and ${\psi^0_j}'(0)$: they are
equal to
\begin{gather*}
\psi^0_j(0)=\psi_j(0)\,,\\
{\psi^0_j}'(0)={\psi_j}'(0)-\i A_j\psi_j(0)\,;
\end{gather*}
substituting them to~\eqref{delta1} we obtain
\begin{gather*}
\psi_j(0)=\psi_k(0)=:\psi(0) \qquad \text{for all} \quad j,k\in\hat{n}\,,\\
\sum_{j=1}^n\left(\psi_j'(0)-\i A_j\cdot\psi_j(0)\right)=\alpha\psi(0)\,.
\end{gather*}
The first line expresses the continuity of the wavefunction in the
vertex supporting the $\delta$ coupling in the same way as in the
absence of vector potentials, whereas the second line shows how
the condition for the sum of the derivatives is changed. With the
continuity in mind, we may replace $\psi_j(0)$ by $\psi(0)$
obtaining
$$
\sum_{j=1}^n\psi_j'(0)=\left(\alpha+\i\sum_{j=1}^n A_j\right)\psi(0)\,,
$$
which finishes the proof.
\end{proof}

Recall that approximating we are constructing supposes that
constant vector potentials are added on the joining edges. If an
edge of the length $2d$ joins the endpoints of the $j$-th and
$k$-th half line, there is a constant vector potential of the
value $A_{(j,k)}(d)$ on the part of the length $d$ closer to the
$j$-th half line and a constant vector potential of the value
$A_{(k,j)}(d) = -A_{(j,k)}(d)$ on the part of the length $d$
closer to the $k$-th half line. With regard to
Lemma~\ref{potencial}, the impact of the added potentials consists
in phase shifts by $d\cdot A_{(j,k)}(d)$ and $d\cdot
A_{(k,j)}(d)$. Let us include this effect into the corresponding
equations, i.e. into~\eqref{bez pot.}:
\begin{eqnarray}\label{III.serie}
&& \begin{aligned} &\varphi_{(j,k)}(d) =\e^{\i dA_{(j,k)}}
(\varphi_{\{j,k\}}(0)+d\,
\varphi_{(j,k)}'(0))+\OO(d^2)\,,\\
&\varphi_{(j,k)}'(d)
=\e^{\i dA_{(j,k)}}\varphi_{(j,k)}'(0)+\OO(d)\,, \quad j,k\in\hat{n}\,.
\label{IIIa.serie}
\end{aligned}
\end{eqnarray}
The system of equations~\eqref{I.serie}, \eqref{II.serie},
and~\eqref{III.serie} describes the relations between values of
wave functions and their derivatives at all the vertices. Next we
will eliminate the terms with the ``auxiliary'' functions
$\varphi_{\{j,k\}}$ and express the relations between $2n$ terms
$\psi_j(0)$, $\psi_j'(0)$, $j\in\hat{n}$.

We begin with the first one of the relations~\eqref{IIIa.serie}
together with the continuity requirement~\eqref{II.serie}, which
yields
\begin{equation}\label{hvezdicka}
d\: \varphi_{(j,k)}'(0)=\e^{-\i dA_{(j,k)}}\psi_j(0)
-\varphi_{\{j,k\}}(0)+\OO(d^2)\,.
\end{equation}
The same relation holds with $j$ replaced by $k$, summing them
together and using the second of the relations~\eqref{I.serie} we
get
$$
\left(2+d\: w_{\{j,k\}}\right)\varphi_{\{j,k\}}(0)
=\e^{-\i dA_{(j,k)}}\psi_j(0)+\e^{-\i dA_{(k,j)}}\psi_k(0)+\OO(d^2)\,.
$$
We express $\varphi_{\{j,k\}}(0)$ from here and substitute into
the first of the equations~\eqref{IIIa.serie}; using at the same
time the first relation of~\eqref{II.serie} we get
$$
\psi_j(0)=\e^{\i dA_{(j,k)}}\cdot \left(\frac{\e^{-\i
dA_{(j,k)}}\psi_j(0)+\e^{-\i dA_{(k,j)}}\psi_k(0)+\OO(d^2)}
{2+d\cdot w_{\{j,k\}}}+d\: \varphi_{(j,k)}'(0)\right)+\OO(d^2)\,,
$$
and considering the second of the equations~\eqref{IIIa.serie}, we
have
$$
\psi_j(0)=\frac{\psi_j(0)+\e^{\i d(A_{(j,k)}-A_{(k,j)})}\psi_k(0)+\OO(d^2)}
{2+d\cdot w_{\{j,k\}}}+d\: \varphi_{(j,k)}'(d)+\OO(d^2)\,.
$$
Since the values of vector potentials are supposed to have the
``antisymmetry'' property, $A_{(k,j)}(d)=-A_{(j,k)}(d)$, we may
simplify the above equation to
\begin{equation}\label{phi'0+}
\psi_j(0)=\frac{\psi_j(0)+\e^{2\i dA_{(j,k)}}\psi_k(0)+\OO(d^2)}
{2+d\cdot w_{\{j,k\}}}+d\: \varphi_{(j,k)}'(d)+\OO(d^2)\,.
\end{equation}
Summing the last equation over $k\in N_j$ yields
\begin{multline*}
\#N_j\cdot\psi_j(0)=\psi_j(0)\cdot\sum_{k\in N_j}\frac{1}{2+d\cdot w_{\{j,k\}}}
+\sum_{k\in N_j}\frac{\e^{2\i dA_{(j,k)}}\psi_k(0)}
{2+d\cdot w_{\{j,k\}}}
+d\cdot\sum_{k\in N_j}\varphi_{(j,k)}'(d)+\\
+\sum_{k\in N_j}\frac{\O(d^2)}{2+d\cdot w_{\{j,k\}}}+\OO(d^2)
\end{multline*}
($\#N_j$ denotes the cardinality of $N_j$), and with the help of the second of the relations~\eqref{II.serie}
we arrive at the final expression,
\begin{multline}\label{j-ty radek}
d\psi_j'(0)=\left(d v_j+\#N_j-\sum_{k\in N_j}\frac{1}{2+d\cdot w_{\{j,k\}}}\right)\psi_j(0)
-\sum_{k\in N_j}\frac{\e^{\i d(A_{(j,k)}-A_{(k,j)})}\psi_k(0)}
{2+d\cdot w_{\{j,k\}}}\\
+\sum_{k\in N_j}\frac{\O(d^2)}{2+d\cdot w_{\{j,k\}}}+\OO(d^2)\,.
\end{multline}

Our objective is to choose $v_j(d)$, $w_{\{j,k\}}(d)$ and
$A_{(j,k)}(d)$ in such a way that in the limit $d\to0$ the system
of relations~\eqref{j-ty radek} with $j\in\hat{n}$ tends to the
system of $n$ boundary conditions~\eqref{Coupling}. The lines
of~\eqref{Coupling} are of two types, let us recall:
\begin{align}
\psi_j'(0)+\sum_{l=m+1}^n T_{jl}\psi'_l(0)&
=\sum_{k=1}^m S_{jk}\psi_k(0) & j\in\hat{m}\,\, \label{j<=m} \\
0&=-\sum_{k=1}^m \overline{T_{kj}}\psi_k(0)+\psi_j(0)
& j=m+1,\ldots,n \label{j>=m+1} \,.
\end{align}
We point out here with reference to~\eqref{Nj} that these
relations may be written also with the summation indices running
through the restricted sets, namely
\begin{align}
\psi_j'(0)+\sum_{l\in N_j\backslash\hat{m}} T_{jl}\psi'_l(0)&
=\sum_{k=1}^m S_{jk}\psi_k(0) & j\in\hat{m}\,\, \label{j<=m.} \\
0&=-\sum_{k\in N_j} \overline{T_{kj}}\psi_k(0)+\psi_j(0)
& j=m+1,\ldots,n \label{j>=m+1.} \,,
\end{align}
since for any pair $j\in\hat{m}$, $l\in\{m+1,\cdots,n\}$ the
implication $T_{jl}\neq0\Rightarrow l\in N_j$ holds, see also
Eqs.~\eqref{Nj1}, \eqref{Nj2}.

When looking for a suitable dependence of $v_j(d)$,
$w_{\{j,k\}}(d)$ and $A_{(j,k)}(d)$ on $d$, we start with
Eq.~\eqref{j-ty radek} in the case when $j\geq m+1$. Our aim is to
find conditions under which~\eqref{j-ty radek} tends
to~\eqref{j>=m+1.} as $d\to0$. It is obvious that the sufficient
conditions are
\begin{eqnarray}
\lim_{d\to0}\left(d v_j+\#N_j-\sum_{k\in N_j}
\frac{1}{2+d\cdot w_{\{j,k\}}}\right)\in\R\backslash\{0\}\,, \label{(1)}\\
\lim_{d\to0}\frac{1}{2+d\cdot w_{\{j,k\}}}\in\R\backslash\{0\}\quad
\forall k\in N_j\,, \label{(2)}\\
\frac{\frac{\e^{2\i dA_{(j,k)}}}
{2+d\cdot w_{\{j,k\}}}}{d v_j+\#N_j-\sum_{h\in N_j}\frac{1}{2+d\cdot w_{\{j,h\}}}}
=\overline{T_{kj}}\quad \forall k\in N_j\,. \label{(3)}
\end{eqnarray}
Now we proceed to the case $j\in\hat{m}$. Our approach is based on
substitution of~\eqref{j-ty radek} into the left-hand side
of~\eqref{j<=m.} and a subsequent comparison of the right-hand
sides. The substitution is straightforward,
\begin{multline}\label{j-ty radek1}
\psi_j'(0)+\sum_{l\in N_j\backslash\hat{m}} T_{jl}\cdot\psi_l'(0)=
\left(v_j+\frac{\#N_j}{d}-\frac{1}{d}\sum_{h\in N_j}
\frac{1}{2+d\cdot w_{\{j,h\}}}\right)\psi_j(0)
-\frac{1}{d}\sum_{k\in N_j}
\frac{\e^{2\i dA_{(j,k)}}\psi_k(0)}{2+d\cdot w_{\{j,k\}}}\\
+\sum_{l\in N_j\backslash\hat{m}}
T_{jl}\left[\left(v_l+\frac{\#N_l}{d} -\frac{1}{d}\sum_{h\in
N_l}\frac{1}{2+d\cdot w_{\{l,h\}}}\right)\psi_l(0)-
\frac{1}{d}\sum_{k\in N_l}\frac{\e^{2\i dA_{(l,k)}}\psi_k(0)}
{2+d\cdot w_{\{l,k\}}}\right] \\
+\OO(d)+\sum_{k\in N_j}\frac{\O(d)}{2+d\cdot w_{\{j,k\}}}
+\sum_{l=m+1}^n T_{jl}\left(\O(d) +\sum_{h\in
N_l}\frac{\O(d)}{2+d\cdot w_{\{l,h\}}}\right)\,,
\end{multline}
then we apply two identities, which can be easily proven, namely
\begin{equation*}
(i)\quad \sum_{k\in N_j}\frac{\e^{2\i dA_{(j,k)}}\psi_k(0)}{2+d\cdot w_{\{j,k\}}}=
\sum_{k\in N_j\cap\hat{m}}\frac{\e^{2\i dA_{(j,k)}}\psi_k(0)}{2+d\cdot w_{\{j,k\}}}+
\sum_{l\in N_j\backslash\hat{m}}\frac{\e^{2\i dA_{(j,l)}}\psi_l(0)}{2+d\cdot w_{\{j,l\}}}\,,
\qquad\qquad\qquad\qquad\quad\ \ 
\end{equation*}
\begin{multline*}
(ii)\quad \sum_{l\in N_j\backslash\hat{m}} T_{jl}\sum_{k\in N_l}
\frac{\e^{2\i dA_{{(l,k)}}}\psi_k(0)}{2+d\cdot w_{\{l,k\}}}\\
=\left(\sum_{l\in N_j\backslash\hat{m}} T_{jl}
\frac{\e^{2\i dA_{(l,j)}}}{2+d\cdot w_{\{l,j\}}}\right)\psi_j(0)
+\sum_{k\in N_j\cap\hat{m}}\left(\sum_{l\in N_k\backslash\hat{m}}
T_{jl}\frac{\e^{2\i dA_{(l,k)}}}{2+d\cdot w_{\{l,k\}}}\right)\psi_k(0)
\end{multline*}
and obtain
\begin{multline}\label{j-ty radek11}
\psi_j'(0)+\sum_{l\in N_j\backslash\hat{m}} T_{jl}\cdot\psi_l'(0)
\\ =\left(v_j+\frac{\#N_j}{d}-\frac{1}{d}\sum_{h\in N_j}
\frac{1}{2+d\cdot w_{\{j,h\}}}
-\frac{1}{d}\sum_{l\in N_j\backslash\hat{m}}T_{jl}\frac{\e^{2\i dA_{(l,j)}}}{2+d\cdot w_{\{l,j\}}}\right)\psi_j(0)\\
-\frac{1}{d}\sum_{k\in N_j\cap\hat{m}}\left(\frac{\e^{2\i dA_{(j,k)}}}{2+d\cdot w_{\{j,k\}}}
+\sum_{l\in N_k\backslash\hat{m}} T_{jl}\frac{\e^{2\i dA_{(l,k)}}}{2+d\cdot w_{\{l,k\}}}\right)\psi_k(0)\\
+\sum_{l\in N_j\backslash\hat{m}}\left(-\frac{1}{d}\cdot\frac{\e^{2\i dA_{(j,l)}}}{2+d\cdot w_{\{j,l\}}}
+T_{jl}\left(v_l+\frac{\#N_l}{d}-\frac{1}{d}\sum_{h\in N_l}\frac{1}{2+d\cdot w_{\{l,h\}}}\right)\right)\psi_l(0)\\
+\OO(d)+\sum_{k\in N_j}\frac{\O(d)}{2+d\cdot w_{\{j,k\}}}
+\sum_{k\in N_j\cap\hat{m}}\sum_{l\in N_k\cap N_j\backslash\hat{m}}\frac{\O(d)}{2+d\cdot w_{\{l,k\}}}\,.
\end{multline}
As we have announced above, the goal is to determine terms
$v_j(d)$, $w_{\{j,k\}}(d)$ and $A_{(j,k)}(d)$ such that if
$d\to0$, the right-hand side of~\eqref{j-ty radek11} tends to the
eight-hand side of~\eqref{j<=m.} for all $j\in\hat{m}$. We observe
that this will be the case provided
\begin{gather}
v_j+\frac{\#N_j}{d}-\frac{1}{d}\sum_{h\in N_j}
\frac{1}{2+d\cdot w_{\{j,h\}}}
-\frac{1}{d}\sum_{l\in N_j}T_{jl}\frac{\e^{2\i dA_{(l,j)}}}{2+d\cdot w_{\{l,j\}}}=S_{jj}\,, \label{(4)}\\
-\frac{1}{d}\frac{\e^{2\i dA_{(j,k)}}}{2+d\cdot w_{\{j,k\}}}
-\frac{1}{d}\sum_{l\in N_k\backslash\hat{m}} T_{jl}
\frac{\e^{2\i dA_{(l,k)}}}{2+d\cdot w_{\{l,k\}}}=S_{jk}
\quad \forall k\in N_j\cap\hat{m}\,, \label{(5)}\\
-\frac{1}{d}\frac{\e^{2\i dA_{(j,l)}}}{2+d\cdot w_{\{j,l\}}}
+T_{jl}\left(v_l+\frac{\#N_l}{d}-\frac{1}{d}\sum_{h\in N_l}
\frac{1}{2+d\cdot w_{\{l,h\}}}\right)=0\quad \forall l\in N_j\backslash\hat{m}\,, \label{(6)}\\
\lim_{d\to0}\frac{1}{2+d\cdot w_{\{j,k\}}}\in\R \quad
\forall k\in N_j\,, \label{(7)}\\
\lim_{d\to0}\frac{1}{2+d\cdot w_{\{l,k\}}}\in\R \quad
\forall k\in N_j\cap\hat{m},\,l\in N_k\cap N_j\backslash\hat{m}\,. \label{(8)}
\end{gather}
It is easy to see that the set of equations~\eqref{(6)} for
$j\in\hat{m}$, $l\in N_j\backslash\hat{m}$ is equivalent to the
set~\eqref{(3)} for $j\geq m+1$, $k\in N_j$. Similarly,
Eq.~\eqref{(8)} for $j\in\hat{m}$, $k\in N_j\cap\hat{m}$, $l\in
N_k\cap N_j\backslash\hat{m}$ is a weaker set of conditions
than~\eqref{(2)} with $j\geq m+1$, $k\in N_j$. Finally,
Eq.~\eqref{(7)} reduces for $k\in N_j\backslash\hat{m}$
to~\eqref{(2)}, thus it suffices to consider~\eqref{(7)} with
$k\in N_j\cap\hat{m}$.

The procedure of determination of $v_j(d)$, $w_{\{j,k\}}(d)$ and
$A_{(j,k)}(d)$ will proceed in three steps, at the end we add the
fourth step involving the verification of the limit
conditions~\eqref{(1)}, \eqref{(2)}, and~\eqref{(7)} restricted to
$k\in N_j\cap\hat{m}$.

\medskip

\noindent \textit{Step I.} We use Eq.~\eqref{(6)} to find an
expression for $w_{\{j,l\}}(d)$ and $A_{(j,l)}(d)$ when
$j\in\hat{m}$ and $l\in N_j\backslash\hat{m}$. We begin with
rearranging Eq.~\eqref{(3)} into the form
\begin{equation}\label{ww}
\frac{1}{2+d\cdot w_{\{j,l\}}}
=\e^{-2\i dA_{(j,l)}}\cdot T_{jl}\left(d v_l+\#N_l
-\sum_{h\in N_l}\frac{1}{2+d\cdot w_{\{l,h\}}}\right)
\quad \forall l\in N_j\backslash\hat{m}\,.
\end{equation}
Since all the terms except $\e^{-2\i dA_{(j,l)}}$ and $T_{jl}$ are
real, we can obtain immediately a condition for $A_{(j,l)}$: We
put
\begin{equation*}
\e^{2\i dA_{(j,l)}}=\left\{\begin{array}{ccl} T_{jl}/\|T_{jl}\| &
\text{if} & \Re\, T_{jl}\geq0\,, \\ [.5em] -T_{jl}/\|T_{jl}\| &
\text{if} & \Re\, T_{jl}<0\,;
\end{array}\right.
\end{equation*}
it is easy to see that such a choice ensures that the expression
$\e^{-2\i dA_{(j,l)}}\cdot T_{jl}$ is always real. The vector
potential strength may be then chosen as follows,
\begin{equation}\label{A m,n-m}
A_{(j,l)}(d)=\left\{\begin{array}{lcl} \frac{1}{2d}\arg\,T_{jl} &
\text{if} & \Re\, T_{jl}\geq0\,, \\
[.5em] \frac{1}{2d}\left(\arg\,T_{jl}-\pi\right) & \text{if} &
\Re\, T_{jl}<0
\end{array}\right.
\end{equation}
for all $j\in\hat{m}$, $l\in N_j\backslash\hat{m}$. We remark that
this choice is obviously not the only one possible. Note that in
this situation, namely if $j\in\hat{m}$ and $l\in
N_j\backslash\hat{m}$, the potentials do not depend on $d$.
Taking~\eqref{A m,n-m} into account, Eq.~\eqref{ww} simplifies to
\begin{equation}\label{www}
\frac{1}{2+d\cdot w_{\{j,l\}}}=\langle T_{jl}\rangle \cdot\left(d
v_l+\#N_l-\sum_{h\in N_l} \frac{1}{2+d\cdot w_{\{l,h\}}}\right)
\quad \forall l\geq m+1, j\in N_l\,;
\end{equation}
note that $j\in\hat{m}\wedge l\in N_j\backslash\hat{m}
\Leftrightarrow l\geq m+1\wedge j\in N_l$. The symbol
$\langle\cdot\rangle$ here has the following meaning: if $c\in\C$,
then
\begin{equation*}
\langle c\rangle=\left\{\begin{array}{ccl}
|c| & \text{if} & \Re\, c\geq0\,, \\
-|c| & \text{if} & \Re\, c<0\,.
\end{array}\right.
\end{equation*}
Summing \eqref{www} over $j\in N_l$ we get
$$
\sum_{j\in N_l}\frac{1}{2+d\cdot w_{\{j,l\}}}
=\sum_{j\in N_l}\langle T_{jl}\rangle\cdot\left(d v_l+\#N_l
-\sum_{h\in N_l}\frac{1}{2+d\cdot w_{\{l,h\}}}\right)\,,
$$
i.e.
$$
\left(1+\sum_{h\in N_l}\langle T_{hl}\rangle\right)
\sum_{j\in N_l}\frac{1}{2+d\cdot w_{\{j,l\}}}
=\sum_{j\in N_l}\langle T_{jl}\rangle\cdot\left(d v_l+\#N_l\right)\,.
$$
We have to distinguish here two situations:\\
\textit{(i)} If $1+\sum_{h\in N_l}\langle T_{hl}\rangle\neq0$, one
obtains
$$
\sum_{h\in N_l}\frac{1}{2+d\cdot w_{\{l,h\}}}
=\frac{\sum_{h\in N_l}\langle T_{hl}
\rangle}{1+\sum_{h\in N_l}\langle T_{hl}\rangle}\cdot(d v_l+\#N_l)\,,
$$
and the substitution of the left-hand side into the right-hand
side of~\eqref{www} leads to the formula for $1/(2+d\cdot
w_{\{j,l\}})$, namely
$$
\frac{1}{2+d\cdot w_{\{j,l\}}}
=\langle T_{jl}\rangle\cdot\frac{d v_l+\#N_l}{1+\sum_{h\in N_l}
\langle T_{hl}\rangle} \qquad \forall j\in\hat{m},\ l\in N_j\backslash\hat{m}\,.
$$
We observe that the sum in the denominator may be computed over
the whole set $\hat{m}$ as well, since $h\notin\N_l\Rightarrow
T_{hl}=0$, which slightly simplifies the formula,
\begin{equation*}\label{ww m,n-m}
\frac{1}{2+d\cdot w_{\{j,l\}}}=\langle T_{jl}
\rangle\cdot\frac{d v_l+\#N_l}{1+\sum_{h=1}^m\langle T_{hl}\rangle}
\qquad \forall j\in\hat{m},\ l\in N_j\backslash\hat{m}\,.
\end{equation*}
From here one can easily express $w_{\{j,l\}}$, if $v_l$ is known.
However, it turns out that $v_l(d)$, $l\geq m+1$ can be chosen
almost arbitrarily, the only requirements are to keep the
expression $d v_l+\#N_l$ nonzero and to satisfy~\eqref{(1)},
\eqref{(2)} and \eqref{(7)}. The simplest choice possible is to
define $v_l$ by the expression
$$
\frac{d v_l+\#N_l}{1+\sum_{h=1}^m\langle T_{hl}\rangle}=1\,,
$$
which simplifies the expressions for other parameters. Here we
obtain already expressions for $v_l$ and $w_{\{j,l\}}$ if $l\geq
m+1$, viz
\begin{equation}\label{v n-m}
v_l(d)=\frac{1-\#N_l+\sum_{h=1}^m\langle T_{hl}\rangle}{d}
\qquad \forall l\geq m+1\,,
\end{equation}
\begin{equation}\label{w m,n-m}
w_{\{j,l\}}(d)=\frac{1}{d}\left(-2+\frac{1}{\langle T_{jl}\rangle}\right)
\qquad \forall j\in\hat{m},\ l\in N_j\backslash\hat{m}\,.
\end{equation}
\textit{(ii)} If $1+\sum_{h\in N_l}\langle T_{hl}\rangle=0$, it
holds necessarily $d v_l+\#N_l=0$, and consequently,
$v_l=-\frac{\#N_l}{d}$. Note that this equation may be obtained
from Eq.~\eqref{v n-m} by putting formally $1+\sum_{h=1}^m\langle
T_{hl}\rangle=0$. It is easy to check that $w_{\{j,l\}}$ given by
Eq.~\eqref{w m,n-m} satisfies~\eqref{ww} in the case $1+\sum_{h\in
N_l}\langle T_{hl}\rangle=0$ as well. Summing these facts up, we
conclude that Eqs. \eqref{v n-m}, \eqref{w m,n-m} hold universally
regardless whether $1+\sum_{h\in N_l}\langle T_{hl}\rangle$ equals
zero or not.

We would like to stress that the freedom in the choice of $v_l(d)$
is a consequence of the fact mentioned in Remark~\ref{redukcepar},
namely that the number of parameters of a vertex coupling
decreases with the decreasing value of $\mathrm{rank}(B)$.

\medskip

\noindent \textit{Step II.} Equation~\eqref{(5)} together with the
results of Step I will be used to determine $w_{\{j,k\}}(d)$ and
$A_{(j,k)}(d)$ in the case when $j\in\hat{m}$ and $k\in
N_j\cap\hat{m}$. From~\eqref{(5)} we have
$$
-\frac{\e^{2\i dA_{(j,k)}}}{2+d\cdot w_{\{j,k\}}}
=d\cdot S_{jk}+\sum_{l\in N_k\backslash\hat{m}}
T_{jl}\frac{\e^{2\i dA_{(l,k)}}}{2+d\cdot w_{\{l,k\}}}\,;
$$
the pairs $(l,k)$ appearing in the sum are of the type examined in
Step I, i.e. $k\in\hat{m}$, $l\in N_k\backslash\hat{m}$). Thus one
may substitute from \eqref{v n-m} and \eqref{w m,n-m} to obtain
$$
-\frac{\e^{2\i dA_{(j,k)}}}{2+d\cdot w_{\{j,k\}}}
=d\cdot S_{jk}+\sum_{l\in N_k\backslash\hat{m}} T_{jl}\overline{T_{kl}}\,.
$$
We observe that the summation index may run through the whole set
$\hat{n}\backslash\hat{m}$, because $l\geq m+1\wedge l\notin
N_k\Rightarrow T_{kl}=0$. This allows one to obtain a more elegant
formula. In a similar way as above, we find the expression for
$A_{(j,k)}$,
\begin{subequations}\label{A m,m}
\begin{equation}
A_{(j,k)}(d)=\frac{1}{2d}\arg\,\left(d\cdot S_{jk}
+\sum_{l=m+1}^n T_{jl}\overline{T_{kl}}\right) \qquad
\text{for}\quad \Re\left(d\cdot S_{jk}
+\sum_{l=m+1}^n T_{jl}\overline{T_{kl}}\right)\geq0
\end{equation}
and
\begin{equation}
A_{(j,k)}(d)=\frac{1}{2d}\left[\arg\,\left(d\cdot S_{jk}
+\sum_{l=m+1}^n T_{jl}\overline{T_{kl}}\right)-\pi\right] \quad
\text{for}\quad \Re\left(d\cdot S_{jk}
+\sum_{l=m+1}^n T_{jl}\overline{T_{kl}}\right)<0\,,
\end{equation}
\end{subequations}
and for $w_{\{j,k\}}$,
\begin{equation}\label{w m,m}
\frac{1}{2+d\cdot w_{\{j,k\}}}=-\left\langle d\cdot S_{jk}
+\sum_{l=m+1}^n T_{jl}\overline{T_{kl}}\right\rangle\,.
\end{equation}

\medskip

\noindent \textit{Step III.} Substitution of the results of Steps
I and II into Eq.~\eqref{(4)} provides an expression for $v_j(d)$
in the case when $j\in\hat{m}$. A simple calculation gives
\begin{equation}\label{v m}
v_j(d)=S_{jj}-\frac{\#N_j}{d}-\sum_{k=1}^m\left\langle
S_{jk}+\frac{1}{d}\sum_{l=m+1}^n
T_{jl}\overline{T_{kl}}\right\rangle
+\frac{1}{d}\sum_{l=m+1}^n(1+\langle T_{jl}\rangle)\langle
T_{jl}\rangle\,.
\end{equation}
Since $S$ is a self-adjoint matrix, the term $S_{jj}$ is real,
thus the whole right-hand side is a real expression.

\vspace*{12pt}

\noindent \textit{Step IV.} Finally, we verify
conditions~\eqref{(1)}, \eqref{(2)}, and~\eqref{(7)}, the last one
being restricted to $k\in N_j\cap\hat{m}$. This step consists in
trivial substitutions:
\begin{align*}
\eqref{(1)}:\quad & \lim_{d\to0}\left(d v_j+\#N_j-\sum_{k\in N_j}
\frac{1}{2+d\cdot w_{\{j,k\}}}\right)=\lim_{d\to0}1=1
\in\R\backslash\{0\}\quad \forall j\geq m+1\,,\\
\eqref{(2)}:\quad & \lim_{d\to0}\frac{1}{2+d\cdot w_{\{j,k\}}}
=\lim_{d\to0}\langle T_{kj}\rangle=\langle T_{kj}
\rangle\in\R\backslash\{0\}\quad \forall j\geq m+1,\, k\in N_j\,,\\
\begin{split}
\eqref{(7)}:\quad &
\lim_{d\to0}\frac{1}{2+d\cdot w_{\{j,k\}}}
=-\lim_{d\to0}\left\langle d\cdot S_{jk}+\sum_{l=m+1}^n T_{jl}\overline{T_{kl}}\right\rangle
=\left\langle\sum_{l=m+1}^n T_{jl}\overline{T_{kl}}\right\rangle\in\R\\
&\forall j\in\hat{m},\, k\in N_j\cap\hat{m}\,.
\end{split}
\end{align*}



\section{The norm-resolvent convergence}

In the previous section we have shown that any vertex coupling in
the center point of a star graph may be regarded as a limit of a
certain family of graphs supporting nothing but $\delta$
couplings, $\delta$ interactions and constant vector potentials.
The parameter values of all the $\delta$'s and vector potentials
have been derived using a method devised originally in \cite{CS98,
SMMC99} for the case of a generalized point interaction on the
line. The aim of this section is to give a clear meaning to this
convergence. Specifically, we are going to show that the
Hamiltonian of the approximating system converges to the
Hamiltonian of the approximated system \emph{in the norm-resolvent
sense}, with the natural consequences for the convergence of
eigenvalues, eigenfunctions, etc.

We denote the Hamiltonian of the star graph $\Gamma$ with the
coupling (\ref{ST}) at the vertex as $H^\mathrm{Ad}$ (referring to
the \emph{a}pproximate\emph{d} system), and $H^\mathrm{Ag}_d$ will
stand for the \emph{a}pproximatin\emph{g} family of graphs that
has been constructed in the previous section. Symbols
$R^\mathrm{Ad}(k^2)$ and $R^\mathrm{Ag}_d(k^2)$ will then denote
the resolvents of $H^\mathrm{Ad}$ and $H^\mathrm{Ag}_d$ at the
points $k^2$ from the resolvent set. Needless to say, the
operators act on different spaces: $R^\mathrm{Ad}(k^2)$ on
$L^2(G)$, where $G=(\R^+)^n$ corresponds to the star graph
$\Gamma$, and $R^\mathrm{Ag}_d(k^2)$ on $L^2(G_d)$, where
\begin{equation}\label{Gd}
G_d=(\R^+)^n\oplus (0,d)^{\sum_{j=1}^n N_j}\,.
\end{equation}
Our goal is to compare these resolvents. In order to do that,
we need to identify $R^\mathrm{Ad}(k^2)$ with the orthogonal sum
\begin{equation}\label{Gdecomp}
R^\mathrm{Ad}_d(k^2)=R^\mathrm{Ad}(k^2)\oplus0\,,
\end{equation}
where $0$ is a zero operator acting on the space
$L^2\left((0,d)^{\sum_{j=1}^n N_j}\right)$ which is removed in the
limit. Then both the operators $R^\mathrm{Ad}_d(k^2)$ and
$R^\mathrm{Ag}_d(k^2)$ are defined as acting on functions from the
set $G_d$ which are vector functions with $n+\sum_{j=1}^n N_j$
components; we will index the components by the set
\begin{equation}\label{mathcal I}
\mathcal{I}=\hat{n}\cup\left\{\left.(l,h)\right|\,l\in\hat{n},h\in N_l\right\}\,.
\end{equation}
In this setting we are able to state now the main theorem of this
section and the whole paper.

\begin{theorem}
Let $v_j,\; j\in\hat{n}$, $,\,w_{\{j,k\}}\; j\in\hat{n}, k\in\N_j$
and $A^{(j,k)}(d)$ depend on $d$ according to \eqref{v m},
\eqref{v n-m}, \eqref{w m,m}, \eqref{w m,n-m}, \eqref{A m,m} and
\eqref{A m,n-m}, respectively.
Then the family $H^\mathrm{Ag}(d)$ converges to $H^\mathrm{Ad}_d$
in the norm-resolvent sense as $d\to 0_+$.
\end{theorem}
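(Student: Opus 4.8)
The plan is to prove norm-resolvent convergence by an explicit computation and comparison of the resolvent kernels. Since all the operators involved are one-dimensional Schr\"odinger-type operators with point interactions and piecewise-constant vector potentials, their resolvents at a fixed spectral point $k^2$ (say with $\Im k>0$ and $-k^2$ in the common resolvent set) can be written down in closed form using the free Green's functions $G_k(x,y)=\frac{\i}{2k}\e^{\i k|x-y|}$ on the halflines together with appropriate Green's functions on the short segments $(0,d)$, glued by the matching conditions \eqref{I.serie}, \eqref{II.serie}. First I would write the resolvent $R^\mathrm{Ag}_d(k^2)$ of the approximating operator as a sum of the decoupled (Dirichlet or free) resolvent on $G_d$ plus a finite-rank correction term; this is the standard Krein-type formula for point interactions, in which the correction has the schematic form $\sum_{p,q} [\Lambda_d^{-1}]_{pq}\,(\overline{g_p^k}\,\cdot\,)\, g_q^k$, where the $g_p^k$ are the canonical solutions (the free Green's function evaluated at the relevant vertices) and $\Lambda_d$ is a finite matrix encoding all the $\delta$ and $\delta'$ strengths $v_j(d),w_{\{j,k\}}(d)$ and the phase factors $\e^{\i d A_{(j,k)}(d)}$.

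The heart of the matter is then a careful asymptotic analysis of the matrix $\Lambda_d$ as $d\to0$. Using the explicit choices \eqref{v m}--\eqref{A m,n-m}, one expands every entry of $\Lambda_d$ in powers of $d$, exactly as was done heuristically in Section~3 when passing from \eqref{j-ty radek} to the target conditions \eqref{Coupling}. I expect that after the substitution of \eqref{v m}--\eqref{A m,n-m}, the block of $\Lambda_d$ corresponding to the long halfline endpoints converges, while the blocks involving the short edges either converge or blow up at a controlled rate $O(1/d)$; the right thing is to rescale rows/columns of $\Lambda_d$ (equivalently, to change the basis of the finite-rank range) so that in the new variables the matrix has a finite nonsingular limit whose inverse reproduces precisely the boundary-condition matrices $S,T$ of \eqref{ST}. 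Since the contributions on the removed segments $(0,d)^{\sum N_j}$ carry $L^2$-norm $O(\sqrt d)\to0$ and the functions $g_p^k$ restricted to the halflines converge in $L^2$, one concludes that $R^\mathrm{Ag}_d(k^2)\to R^\mathrm{Ad}_d(k^2)$ in operator norm, i.e. the finite-rank correction converges in norm to the Krein term for the coupling \eqref{ST}, and the ``remainder'' living on the vanishing segments contributes $o(1)$.

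More precisely, the key steps in order are: (1) fix $k$ with $-k^2$ in the resolvent set of $H^\mathrm{Ad}$ and of all $H^\mathrm{Ag}_d$ for small $d$ (a uniform gap statement, which follows since the operators are bounded below uniformly — the $\delta'$ couplings with strengths $\sim 1/d$ still give uniformly semibounded operators because of the sign pattern built into $\langle\cdot\rangle$); (2) write the Krein resolvent formula for $H^\mathrm{Ag}_d$ with respect to the fully decoupled comparison operator on $G_d$; (3) substitute the scalings \eqref{v m}--\eqref{A m,n-m} into the finite ``coupling matrix'' and perform the $d\to0$ expansion, identifying the correct renormalising diagonal rescaling of the finite-rank range; (4) show the rescaled coupling matrix has an invertible limit and that its inverse, inserted into the Krein formula, yields exactly $R^\mathrm{Ad}(k^2)\oplus 0$; (5) estimate all error terms, in particular bounding the pieces supported on $(0,d)^{\sum N_j}$ in Hilbert–Schmidt norm by $O(\sqrt d)$ and the $O(d)$ and $O(d^2)$ remainders from \eqref{III.serie}/\eqref{j-ty radek} in operator norm, concluding $\|R^\mathrm{Ag}_d(k^2)-R^\mathrm{Ad}_d(k^2)\|\to0$.

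The main obstacle I anticipate is step (3)–(4): the finite coupling matrix $\Lambda_d$ is not uniformly bounded (several entries scale like $1/d$), so naive matrix inversion is illegitimate, and one must find the correct $d$-dependent similarity transformation that simultaneously tames \emph{all} the divergent blocks — the ones coming from $w_{\{j,l\}}$ with $l\ge m+1$, from $w_{\{j,k\}}$ with $j,k\in\hat m$, and from the diagonal $v_j$ — and check that in the limit the off-diagonal coupling between the ``convergent'' and ``divergent'' subspaces produces precisely the Schur-complement combination $S_{jk}+\sum_{l}T_{jl}\overline{T_{kl}}$ and $T_{kj}$ appearing in \eqref{ST}. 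This is essentially a block-matrix asymptotics/Schur-complement bookkeeping exercise, parallel to the heuristic elimination of the $\varphi_{\{j,k\}}$ already carried out in Section~3, but now it has to be done at the level of resolvent operators with uniform-in-$d$ error control rather than formal boundary values; a secondary technical point is establishing the uniform resolvent-set statement in step (1), which I would handle via a Neumann-series/Birman–Schwinger argument using the explicit lower bound coming from the structure of the chosen parameters.
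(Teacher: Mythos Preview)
Your proposal is essentially the paper's own approach: Krein's formula for both $H^\mathrm{Ad}$ and $H^\mathrm{Ag}_d$, explicit asymptotic analysis of the finite coupling matrix as $d\to0$, and a Hilbert--Schmidt estimate of the Green's-function difference yielding $\|R^\mathrm{Ag}_d(k^2)-R^\mathrm{Ad}_d(k^2)\|_2=\mathcal O(\sqrt d)$.

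Two small points where your write-up diverges from the paper's execution. First, there are no $\delta'$ couplings anywhere in the construction; the approximating graph carries only $\delta$ couplings/interactions and constant vector potentials, so your step~(1) remark about ``$\delta'$ couplings with strengths $\sim1/d$'' is a slip. Second, you anticipate that ``naive matrix inversion is illegitimate'' and that a $d$-dependent similarity transformation is needed to tame the $1/d$ blocks. In fact the paper shows that no rescaling is required: writing the coupling matrix as
\[
M_d=\begin{pmatrix} S+\kappa I^{(m)}+\tfrac1d TT^* & -\tfrac1d T\\[2pt] -\tfrac1d T^* & (\kappa+\tfrac1d)I^{(n-m)}\end{pmatrix}+\mathcal O(d),
\]
a direct block inversion of the principal part gives the Schur complement $(S+\kappa I^{(m)}+\frac{\kappa}{1+\kappa d}TT^*)^{-1}$, which is uniformly bounded, and one checks $[M_d]^{-1}=\Lambda^\mathrm{Ad}(k^2)+\mathcal O(d)$ with all entries $\mathcal O(1)$. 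So the Schur-complement structure you expect does appear, but it falls out of straightforward block inversion rather than requiring a preliminary renormalisation of the finite-rank range. The uniform resolvent-set statement in your step~(1) is likewise handled simply by taking $\Re\kappa$ large enough that $S+\kappa(I^{(m)}+TT^*)$ is invertible.
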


\begin{proof}
We have to compare the resolvents $R^\mathrm{Ad}_d(k^2)$ and
$R^\mathrm{Ag}_d(k^2)$. It is obviously sufficient to check the
convergence in the Hilbert-Schmidt norm,
$$
\left\|R^\mathrm{Ag}_d(k^2)-R^\mathrm{Ad}_d(k^2)\right\|_2\to0_+
\quad \text{ as }\: d\longrightarrow0_+\,,
$$
in other words, to show that the difference of the corresponding
resolvent kernels denoted as $\GG^{\mathrm{Ag},d}_k$ and
$\GG^{\mathrm{Ad},d}_k$, respectively, tends to zero in $L^2(G_d\oplus
G_d)$. Recall that these kernels, or Green's functions, are in our
case matrix functions with $\left(n+\sum_{j=1}^n
N_j\right)\times\left(n+\sum_{j=1}^n N_j\right)$ entries. We will
index the entries by pairs of indices taken from the set
$\mathcal{I}$ (cf.~\eqref{mathcal I}).

The proof is divided into three parts. In the first and the second
part we will derive the resolvent kernels $\GG^{\mathrm{Ag},d}_k$
and $\GG^{\mathrm{Ad},d}_k$, respectively, in the last part we compare them
and demonstrate the norm-resolvent convergence.

\medskip

\noindent \textit{I. Resolvent of the approximated Hamiltonian}

\medskip

\noindent Let us construct first $\GG^\mathrm{Ad}_k$ for the
star-graph Hamiltonian with the condition (\ref{1}) at the vertex.
We begin with $n$ independent halflines with Dirichlet condition
at its endpoints; Green's function for each of them is well-known
to be
$$
\GG_{\i\kappa}(x,y)=\frac{\sinh\kappa x_<\: \e^{-\kappa
x_>}}{\kappa}\,,
$$
where $x_<:=\min\{x,y\},\:x_>:=\max\{x,y\}$, and we put
$\i\kappa=k$ assuming conventionally $\Re\kappa>0$. The sought
Green's function is then given by Krein's formula
\cite[App.~A]{AGHH},
\begin{equation}\label{KreinAd}
R^\mathrm{Ad}(k^2)=R^{Hl}(k^2) +\sum^{n}_{j,l=1}\lambda_{jl}(k^2)
\left(\phi_l\left(\overline{k^2}\right),\cdot\right)_{
L^2((\mathbb{R}^+)^n)}\phi_j(k^2)\,,
\end{equation}
where $R^{Hl}(k^2)$ acts on each half line as an integral operator
with the kernel $\GG_{\i\kappa}$, and for $\phi_j(k^2)$ one can
choose any elements of the deficiency subspaces of the largest
common restriction; we will work with
$\left(\phi_j(k^2)(\vec{x})\right)_l =\delta_{jl}\e^{-\kappa x_j}$
where the symbol $\vec{x}$ stands here for the vector
$(x_1,\ldots,x_n)\in(\R^+)^n$. Then we have
\begin{equation*}
R^\mathrm{Ad}(k^2)\left(\begin{array}{c}
\psi_1\\
\vdots\\
\psi_n
\end{array}\right)
\left(\begin{array}{c}
x_1\\
\vdots\\
x_n
\end{array}\right)
=
\left(\begin{array}{c}
\int^{+\infty}_{0}\G_{\i\kappa}(x_1,y_1)\psi_1(y_1)\d y_1\\
\vdots\\
\int^{+\infty}_{0}\G_{\i\kappa}(x_n,y_n)\psi_n(y_n)\d y_n
\end{array}\right)
+
\sum^{n}_{j,l=1}\lambda_{jl}(k^2)\left(\e^{-\bar{\kappa}y_l},\psi_l(y_l)\right)_{L^2(\R^+)}
\left(\begin{array}{c}
0\\
\vdots\\
\e^{-\kappa x_j}\\
\vdots\\
0
\end{array}
\right)\,,
\end{equation*}
which should be satisfied for any
$\left(\psi_1,\ldots,\psi_n\right)^T\in \bigoplus_{j=1}^n
L^2(\mathbb{R}^+)$. We observe that for all $j\in\hat{n}$, the
$j$-th component on the right hand side depends only on the
variable $x_j$. That is why one can consider the components as
functions of one variable; we denote them as $g_j(x_j)$,
$j\in\hat{n}$, in other words,
$$
R^\mathrm{Ad}(k^2)\left(\begin{array}{c}
\psi_1\\
\vdots\\
\psi_n
\end{array}\right)
\left(\begin{array}{c}
x_1\\
\vdots\\
x_n
\end{array}\right)
=:
\left(\begin{array}{c}
g_1(x_1)\\
\vdots\\
g_n(x_n)
\end{array}\right)\,.
$$
The functions $g_j$ are therefore given explicitly by
$$
g_j(x_j)=\int^{+\infty}_{0}\GG_{\i\kappa}(x_j,y)\psi_j(y)\D
y+\sum^{n}_{l=1}\lambda_{jl}(k^2)\int^{+\infty}_{0}\e^{-\kappa
y} \psi_l(y)\D y\cdot\e^{-\kappa x_j}\,.
$$
Since the resolvent maps the whole Hilbert space into the domain
of the operator, these functions have to satisfy the boundary
conditions at the vertex,
\begin{equation}\label{KrPodm}
\sum^{n}_{h=1}A_{jh}g_h(0)+\sum^{n}_{h=1}B_{jh}g_h'(0)=0 \quad
\text{for all }\; j\in\hat{n}\,,
\end{equation}
where
$$
A=\left(\begin{array}{cc}
S & 0 \\
-T^* & I^{(n-m)}
\end{array}\right)\,,\quad
-B=\left(\begin{array}{cc}
I^{(m)} & T \\
0 & 0
\end{array}\right)\,.
$$
Using the explicit form of $\GG_{\i\kappa}(x_h,y)$ and the equality
$\left.\frac{\partial\GG_\kappa(x_h,y)}{\partial
x_h}\right|_{x_h=0}=\e^{-\kappa y}$, we find
\begin{equation*}
g_h(0)=\sum^{n}_{l=1}\lambda_{hl}(k^2)\int^{+\infty}_{0}\e^{-\kappa
y}\psi_l(y)\D y
\end{equation*}
and
\begin{equation*}
g_h'(0)=\int^{+\infty}_{0}\e^{-\kappa y}\psi_h(y)\D
y-\kappa\sum^{n}_{l=1}\lambda_{hl}(k^2)\int^{+\infty}_{0}\e^{-\kappa
y}\psi_l(y)\D y\,.
\end{equation*}
Substituting from these two relations into~\eqref{KrPodm} we get a
system of equations,
$$
\sum^{n}_{l=1}\int^{+\infty}_{0}
\left(\sum^{n}_{h=1}A_{jh}\lambda_{hl}(k^2)+B_{jl}
-\kappa\sum^{n}_{h=1}B_{jh}\lambda_{hl}(k^2)\right) \e^{-\kappa
y}\psi_l(y)\D y=0
$$
with $j\in\hat{n}$. We require that the left-hand side vanishes
for any $\psi_1,\psi_2,\ldots,\psi_n$; this yields the condition
$A\Lambda+B-\kappa B\Lambda=0$. From here it is easy to find the
matrix $\Lambda(k^2)$: we have $(A-\kappa B)
\Lambda=-B$, and therefore
$$
\Lambda(k^2)=(A-\kappa B)^{-1}(-B)\,.
$$
Substituting the
explicit forms of $A$ and $-B$ into the expression for $\Lambda$,
we obtain
$$
\Lambda(k^2)=
\left(\begin{array}{cc}
S+\kappa I^{(m)} & \kappa T \\
-T^* & I^{(n-m)}
\end{array}\right)^{-1}
\left(\begin{array}{cc}
I^{(m)} & T \\
0 & 0
\end{array}\right)\,,
$$
or explicitly
\begin{equation*}
\Lambda(k^2)=
\left(\begin{array}{cc}
(S+\kappa I^{(m)}+\kappa TT^*)^{-1} & (S+\kappa I^{(m)}+\kappa TT^*)^{-1}T \\
T^*(S+\kappa I^{(m)}+\kappa TT^*)^{-1} & T^*(S+\kappa I^{(m)}+\kappa TT^*)^{-1}T
\end{array}\right)
\end{equation*}
provided that $(S+\kappa I^{(m)}+\kappa TT^*)^{-1}$ is well
defined. To check that the matrix $S+\kappa I^{(m)}+\kappa TT^*$
is regular, we notice that
\begin{equation}\label{RegularitaMatice}
S+\kappa I^{(m)}+\kappa TT^*=S+\kappa(I^{(m)}+TT^*)\,,
\end{equation}
where the matrix $I^{(m)}+TT^*$ is positive definite and thus
regular, and the value $\kappa$ may be chosen arbitrarily with the
only restriction $\Re\kappa>0$. Consequently, it suffices to
choose $\Re\kappa$ big enough to make the matrix
$\kappa(I^{(m)}+TT^*)$ dominate over $S$, which ensures the
regularity of $S+\kappa(I^{(m)}+TT^*)$.

Having found the coefficients $\lambda_{jl}(k^2)$, we have fully
determined the Green's function $\GG^\mathrm{Ad}_{\i\kappa}$ of
the approximated system. Recall that it is an $n\times n$
matrix-valued function the ($j,l$)-th element of which is given by
\begin{equation}\label{G Ad}
\GG^\mathrm{Ad}_{\i\kappa,jl}(x,y)= \delta_{jl}\frac{\sinh\kappa
x_<\: \e^{-\kappa x_>}}{\kappa} +\lambda_{jl}(k^2)\ \e^{-\kappa
x}\e^{-\kappa y}\,;
\end{equation}
we use the convention that $x$ is from the $j$-th halfline and $y$
from the $l$-th one. The kernel of the operator
$R^\mathrm{Ad}_d(k^2)$ is according to \eqref{Gdecomp} given simply
by
\begin{equation}\label{G Ad,d}
\GG^{\mathrm{Ad},d}_{\i\kappa}=\left(\begin{array}{c|c}
\GG^\mathrm{Ad}_{\i\kappa} & 0 \\
\hline
0 & 0
\end{array}\right)\,,
\end{equation}
i.e. all entries of $\G^{\mathrm{Ad},d}_{\i\kappa}$ except for those indexed by $j,l\in\hat{n}$ vanish.

\medskip

\noindent \textit{II. Resolvents of the approximating family of
Hamiltonians}

\medskip

\noindent Next we will pass to resolvent construction for the
approximating family of operators $H^\mathrm{Ag}_d$. As a starting
point we consider $n$ independent halflines and $\sum_{j=1}^n N_j$
lines of the length $d$ with constant vector potentials
$A_{(j,l)}(d)$, both halflines and lines of the finite length are
supposed to have Dirichlet endpoints. We know that the Green's
function is $\GG_{\i\kappa}(x,y)=\kappa^{-1} \sinh\kappa x_<
\e^{-\kappa x_>}$ in the case of the halflines. The Green's
function in the case of the lines of the length $d$ will be found
in two steps. We begin with a line without vector potential and
with Dirichlet endpoints; the Green's function can be easily
derived being equal to
$$
\tilde{\GG}_{\i\kappa}(x,y)=\frac{\sinh\kappa x_< \sinh\kappa(d-x_>)}{\kappa\sinh\kappa d}\,.
$$
The Hamiltonian of a free particle on a line segment acts as
$-\frac{\d^2}{\d x^2}$, if a vector potential $A$ is added it
changes to $\left(-\i\frac{\d}{\d x}-A\right)^2$. Using
Lemma~\ref{Transformovana delta} it is easy to check that
\begin{equation}\label{transformace}
\left(-\i\frac{\d}{\d x}-A\right)^2=U\left(-\frac{\d^2}{\d x^2}\right)U^*\,,
\end{equation}
where $U$ is the unitary operator acting as
$$
(U\psi)(x)=\e^{\i A x}\psi(x)\,.
$$
If we denote $H_0=-\frac{\d^2}{\d x^2}$ and $H_A=\left(-\i\frac{\d}{\d x}-A\right)^2$, we see
that
$$
\left(H_A-\lambda\right)^{-1}=\left(UH_0U^*-\lambda\right)^{-1}
=\left(U(H_0-\lambda\right)U^*)^{-1}=U\left(H_0-\lambda\right)^{-1}U^*\,,
$$
so the corresponding resolvents are related by the relation
analogous to~\eqref{transformace}. This yields
\begin{multline*}
\left((H_A-\lambda)^{-1}\psi\right)(x)=\left(U(H_A-\lambda)^{-1}U^*\psi\right)(x)
=\e^{\i Ax}\int_0^d\tilde{\GG}_{\i\kappa}(x,y)\, \e^{-\i Ay}\psi(y)\d y\\
=\int_0^d\e^{\i Ax}\frac{\sinh\kappa x_<
\sinh\kappa(d-x_>)}{\kappa\sinh\kappa d}\, \e^{-\i Ay}\psi(y)\d
y\,,
\end{multline*}
thus the sought integral kernel is equal to
$$
\tilde{\GG}^A_{\i\kappa}(x,y)=\e^{\i Ax}\frac{\sinh\kappa x_<
\sinh\kappa(d-x_>)}{\kappa\sinh\kappa d}\, \e^{-\i Ay}\,.
$$

Now we can proceed to the derivation of the complete resolvent
$R^\mathrm{Ag}_d(k^2)$ which will be done again by means of the
Krein's formula. The situation here is more complicated than in
the case of the approximated system; recall that
$R^\mathrm{Ag}_d(k^2)$, as well as $H^\mathrm{Ad}_d(k^2)$, acts on the
larger Hilbert space $L^2(G_d)$, where $G_d$ has been defined
in~\eqref{Gd}.
Moreover, the application of Krein's formula means that we have to
connect all the line segments using the appropriate boundary
conditions, i.e. we must change boundary conditions at
$n+2\sum_{j=1}^n N_j$ endpoints, specifically $n$ belonging to $n$
half lines and $2\sum_{j=1}^n N_j$ belonging to $\sum_{j=1}^n N_j$
segments of the length $d$. Thus the index set for the indices in
the sum on the right hand side of the formula has $n+2\sum_{j=1}^n
N_j$ elements; we will index them by the set
$$
\hat{\mathcal{I}}=\hat{n}\cup\left\{\left.(l,h)^0\right|\,l\in\hat{n},h\in
N_l\right\} \cup\left\{\left.(l,h)^d\right|\,l\in\hat{n},h\in
N_l\right\}\,.
$$
The elements of $\hat{n}$ correspond to changed boundary
conditions at the endpoints of the half lines, and the elements of
the type $(l,h)^0$ and $(l,h)^d$ ($h\in N_l$) correspond to
changed boundary conditions at the endpoints of the segments of
the length $d$ which are connected to the endpoint of the $l$-th
half line.
If we denote by the symbol $R^\mathrm{Dc}_d(k^2)$ the resolvent of
the system of the $n+\sum_{j=1}^n N_j$ decomposed edges with
Dirichlet boundary conditions at the endpoints, Krein's formula
for this pair of operators has the form
\begin{equation}\label{KreinAg}
R^\mathrm{Ag}_d(k^2)=R^\mathrm{Dc}_d(k^2)+\sum_{J,L\in\hat{\mathcal{I}}}\lambda^d_{JL}(k^2)
\left(\phi^{d}_L\left(\overline{k^2}\right),\cdot\right)_{
L^2(G_d)}\phi^{d}_J(k^2)\,.
\end{equation}
The role of the superscript $d$ in the lambda symbols is to
distinguish them from $\lambda_{jl}$ that have been used in
Eq.~\eqref{KreinAd} for the resolvent of the approximated system.
The functions $\phi^{d}_J$ ($J\in\hat{\mathcal{I}}$) may be
chosen, as before in the case of the approximated system, as any
elements of the corresponding deficiency subspaces of the largest
common restriction. Note that each function $\phi^{d}_J$ has
$n+\sum_{j=1}^n N_j$ components indexed by elements of the set
$\mathcal{I}=\hat{n}\cup\left\{\left.(l,h)\right|\,l\in\hat{n},h\in
N_l\right\}$. It turns out that a suitable choice is
\begin{equation}\label{phi}
\begin{array}{rll}
\left(\phi_j(k^2)^d(\vec{x})\right)_{\tilde L}=&\delta_{j\tilde L}\e^{-\kappa x_j} & \text{for}\quad j\in\hat{n}\,,\tilde L\in\mathcal{I}\,, \\
\left(\phi_{(l,h)^0}^d(k^2)(\vec{x})\right)_{\tilde L}=&\e^{\i A_{(l,h)}x_{(l,h)}}\delta_{(l,h)\tilde L}\sinh\kappa x_{(l,h)} & \text{for}\quad l\in\hat{n},\,h\in N_l\,,\tilde L\in\mathcal{I} \\
\left(\phi_{(l,h)^d}^d(k^2)(\vec{x})\right)_{\tilde L}=&\e^{\i A_{(l,h)}x_{(l,h)}}\delta_{(l,h)\tilde L}\sinh\kappa(d-x_{(l,h)}) & \text{for}\quad l\in\hat{n},\,h\in N_l\,,\tilde L\in\mathcal{I}\,,
\end{array}
\end{equation}
where the symbol $\vec{x}$ denotes the vector from $G_d$ with the
components indexed by $\mathcal{I}$. We remark that if
$J\in\hat{n}$, $\phi^{d}_J$ is independent of $d$ and equal to the
corresponding function chosen above in the case of the
approximated system.

If we apply the operator~\eqref{KreinAg} to an arbitrary $\Psi\in
\bigoplus_{j=1}^n L^2(G_d)$, we obtain a vector function with
$n+\sum_{j=1}^n N_j$ components indexed by $\mathcal{I}$, we
denote them by $g_j$ ($j\in\hat{n}$) and $g_{(l,h)}$ with
$l\in\hat{n}, h\in N_l$. As in the case of the approximated
system, a component $g_J$ depends on $x_J$ only, thus each $g_J$
can be considered as a function of a single variable. A
calculation leads to the following explicit expressions for $g_j,
j\in\hat{n}$ and $g_{(l,h)}, l\in\hat{n}, h\in N_l$; for better
clarity we distinguish the integral variables on $\R^+$ and on
$(0,d)$ by a tilde, i.e. $y\in\R^+$, $\tilde{y}\in(0,d)$.
\begin{subequations}\label{g aproximujici}
\begin{multline}
g_j(x_j)=\int_{0}^{+\infty}\GG_{\i\kappa}(x_j,y)\psi_j(y)\,\D
y+\sum_{j'=1}^{n}\lambda^d_{jj'}(k^2)\int_{0}^{+\infty}\e^{-\kappa
y}\cdot\psi_{j'}(y)\,\D y\cdot\e^{-\kappa x_j}\\
+\sum_{l'=1}^{n}\sum_{h'\in N_{l'}}\left(\lambda^d_{j(l'h')^0}(k^2)\int_{0}^{d}\e^{-\i A_{(l',h')}\tilde{y}}\sinh\kappa
\tilde{y}\cdot\psi_{(l',h')}(\tilde{y})\,\D\tilde{y}\right.\\
+\left.\lambda^d_{j(l'h')^d}(k^2)\int_{0}^{d}\e^{-\i
A_{(l',h')}\tilde{y}}\sinh\kappa
(d-\tilde{y})\cdot\psi_{(l',h')}(\tilde{y})\,\D\tilde{y}\right)
\cdot\e^{-\kappa x_j} \,.
\end{multline}
\begin{multline}
g_{(l,h)}(x_{(l,h)})=\int_{0}^{d}\tilde{\GG}^{A_{(l,h)}}_{\i\kappa}(x_{(l,h)},\tilde{y})\psi_{(l,h)}(\tilde{y})\,\D \tilde{y}\\
+\e^{\i A_{(l,h)}x_{(l,h)}}\cdot\sinh\kappa x_{(l,h)}\cdot\left[\sum_{j'=1}^{n}\lambda^d_{(l,h)^0j'}(k^2)\int_{0}^{+\infty}\e^{-\kappa
y}\cdot\psi_{j'}(y)\,\D y \right.\\
+\sum_{l'=1}^{n}\sum_{h'\in N_{l'}}\left(\lambda^d_{(l,h)^0(l'h')^0}(k^2)\int_{0}^{d}\e^{-\i A_{(l',h')}\tilde{y}}\sinh\kappa
\tilde{y}\cdot\psi_{(l',h')}(\tilde{y})\,\D\tilde{y}\right.\\
+\left.\left.\lambda^d_{(l,h)^0(l'h')^d}(k^2)\int_{0}^{d}\e^{-\i
A_{(l',h')}\tilde{y}}\sinh\kappa
(d-\tilde{y})\cdot\psi_{(l',h')}(\tilde{y})\,\D\tilde{y}\right)
\right]\\
+\e^{\i A_{(l,h)}x_{(l,h)}}\cdot\sinh\kappa(d-x_{(l,h)})\cdot\left[\sum_{j'=1}^{n}\lambda^d_{(l,h)^dj'}(k^2)\int_{0}^{+\infty}\e^{-\kappa
y}\cdot\psi_{j'}(y)\,\D y \right.\\
+\sum_{l'=1}^{n}\sum_{h'\in N_{l'}}\left(\lambda^d_{(l,h)^d(l'h')^0}(k^2)\int_{0}^{d}\e^{-\i A_{(l',h')}\tilde{y}}\sinh\kappa
\tilde{y}\cdot\psi_{(l',h')}(\tilde{y})\,\D\tilde{y}\right.\\
+\left.\left.\lambda^d_{(l,h)^d(l'h')^d}(k^2)\int_{0}^{d}\e^{-\i
A_{(l',h')}\tilde{y}}\sinh\kappa
(d-\tilde{y})\cdot\psi_{(l',h')}(\tilde{y})\,\D\tilde{y}\right)
\right] \,.
\end{multline}
\end{subequations}
By definition the function $(g_J)_{J\in\mathcal{I}}$ belongs to
the domain of the operator $H^\mathrm{Ag}_{d}$, in particular, it
has to satisfy the boundary conditions at the points where the
edges are connected by $\delta$ interactions and $\delta$
couplings. Step by step we will write down now all these boundary
conditions; this will lead to the explicit expressions for the
coefficients $\lambda_{JL}^d(k^2)$.

\medskip

\noindent \textit{Step 1.} The continuity at the points
$W_{\{j,k\}}$ means
\begin{equation}\label{Krok1}
g_{(l,h)}(0)=g_{(h,l)}(0)
\end{equation}
for all $l\in\hat{n}, h\in N_l$. Since
$\tilde{\GG}^{A_{(l,h)}}_{\i\kappa}(0,\tilde{y})=0$ for all
$\tilde{y}\in(0,d)$, it holds
\begin{multline*}
g_{(l,h)}(0)=\sinh\kappa d\cdot\left[\sum_{j'=1}^{n}\lambda^d_{(l,h)^dj'}(k^2)\int_{0}^{+\infty}\e^{-\kappa
y}\psi_{j'}(y)\,\D y \right.\\
+\sum_{l'=1}^{n}\sum_{h'\in N_{l'}}\left(\lambda^d_{(l,h)^d(l'h')^0}(k^2)\int_{0}^{d}\e^{-\i A_{(l',h')}\tilde{y}}\sinh\kappa
\tilde{y}\psi_{(l',h')}(\tilde{y})\,\D\tilde{y}\right. \\
+\left.\left.\lambda^d_{(l,h)^d(l'h')^d}(k^2)\int_{0}^{d}\e^{-\i
A_{(l',h')}\tilde{y}}\sinh\kappa
(d-\tilde{y})\psi_{(l',h')}(\tilde{y})\,\D\tilde{y}\right)
\right]\,,
\end{multline*}
the expression for $g_{(h,l)}(0)$ is similar, just the positions
of $l$ and $h$ are interchanged. Since Eq.~\eqref{Krok1} must be
satisfied for any choice of the function
$\Psi=(\psi_J)_{J\in\mathcal{I}}$, the following equalities
obviously hold for $\forall l\in\hat{n}, h\in N_l$:
\begin{equation}
\begin{array}{cl}
\lambda^d_{(l,h)^dj'}(k^2)=\lambda^d_{(h,l)^dj'}(k^2) &\quad \forall j'\in\hat{n}\,, \\
\lambda^d_{(l,h)^d(l',h')^0}(k^2)=\lambda^d_{(h,l)^d(l',h')^0}(k^2) &\quad \forall l'\in\hat{n}, h'\in N_{l'}\,, \\
\lambda^d_{(l,h)^d(l',h')^d}(k^2)=\lambda^d_{(h,l)^d(l',h')^d}(k^2) &\quad \forall l'\in\hat{n}, h'\in N_{l'}\,.
\end{array}
\end{equation}
In other words, all the coefficients $\lambda^d_{(l,h)^d J}(k^2)$ with
$J\in\hat{\mathcal{I}}(k^2)$ are symmetric with respect to an
interchange of $l$ and $h$.

\medskip

\noindent \textit{Step 2.} The sum of derivatives in points
$W_{\{j,k\}}$ is
\begin{equation}\label{Krok2}
g'_{(l,h)}(0)+g'_{(h,l)}(0)=w_{\{l,h\}}\cdot g_{(l,h)}(0)
\end{equation}
for all $l\in\hat{n}, h\in N_l$. We substitute
$$
\left.\frac{\partial\tilde{\GG}^{A}_{\i\kappa}(x,\tilde{y})}{\partial x}\right|_{x=0} =
\frac{\sinh\kappa(d-\tilde{y})}{\sinh\kappa d}\,\e^{-\i
A\tilde{y}}\,,
$$
$$
\left.\left(\e^{\i A x}\sinh\kappa x\right)'\right|_{x=0}=\kappa
$$
and
$$
\left.\left(\e^{\i A x}\sinh\kappa(d-x)\right)'\right|_{x=0}=\kappa\cosh\kappa d-\i A\sinh\kappa d
$$
into Eq.~\eqref{Krok2} and require the equality to be satisfied
for any $\Psi=(\psi_J)_{J\in\mathcal{I}}$. In the course of the
calculation, the outcome of the Step 1 is also used. As a result,
we find how the coefficients $\lambda^d_{(l,h)^d J}(k^2)$
($J\in\mathcal{I}$) can be expressed in terms of
$\lambda^d_{(l,h)^0 J}(k^2)$ and $\lambda^d_{(h,l)^0 J}(k^2)$:
\begin{subequations}
\begin{equation}
\lambda^d_{(l,h)^dj'}(k^2)=\frac{\kappa}{2\kappa\cosh\kappa d+w_{\{l,h\}}\sinh\kappa d}\left(\lambda^d_{(l,h)^0j'}(k^2)+\lambda^d_{(h,l)^0j'}(k^2)\right) \quad \forall j'\in\hat{n}\,,
\end{equation}
\begin{equation}
\lambda^d_{(l,h)^d(l',h')^0}(k^2)=\frac{\kappa}{2\kappa\cosh\kappa d+w_{\{l,h\}}\sinh\kappa d}\left(\lambda^d_{(l,h)^0(l',h')^0}(k^2)+\lambda^d_{(h,l)^0(l',h')^0}(k^2)\right) \quad
\forall l'\in\hat{n}, h'\in N_{l'}\,,
\end{equation}
\begin{multline}
\lambda^d_{(l,h)^d(l',h')^d}(k^2)
=\frac{\kappa}{2\kappa\cosh\kappa d+w_{\{l,h\}}\sinh\kappa d}\left(\lambda^d_{(l,h)^0(l',h')^d}(k^2)+\lambda^d_{(h,l)^0(l',h')^d}(k^2)+\frac{\delta_{(l,h)(l',h')}}{\kappa\sinh\kappa d}\right) \\
\forall l'\in\hat{n}, h'\in N_{l'}
\end{multline}
\end{subequations}
for the indices $l\in\hat{n}, h\in N_l$.

\medskip

\noindent \textit{Step 3.} The continuity at the points $V_j$
requires
\begin{equation}\label{Krok3}
g_j(0)=g_{(j,h)}(d)
\end{equation}
for all $j\in\hat{n}, h\in N_j$. Since $\GG_{\i\kappa}(0,y)=0$ for
all $x\in\R^+$ and $\tilde{\GG}^{A_{(l,h)}}_{\i\kappa}(d)=0$ for
all $\tilde{y}\in(0,d)$, it holds
\begin{multline}
g_j(0)=\sum_{j'=1}^{n}\lambda^d_{jj'}(k^2)\int_{0}^{+\infty}\e^{-\kappa
y} \psi_{j'}(y)\,\D y\cdot\e^{-\kappa x_j} \\
+\sum_{l'=1}^{n}\sum_{h'\in N_{l'}}\left(\lambda^d_{j(l'h')^0}(k^2)\int_{0}^{d}\e^{-\i A_{(l',h')}\tilde{y}}\sinh\kappa
\tilde{y}\psi_{(l',h')}(\tilde{y})\,\D\tilde{y}\right. \\
+\left.\lambda^d_{j(l'h')^d}(k^2)\int_{0}^{d}\e^{-\i
A_{(l',h')}\tilde{y}}\sinh\kappa
(d-\tilde{y})\psi_{(l',h')}(\tilde{y})\,\D\tilde{y}\right)\,,
\end{multline}
and
\begin{multline}
g_{(j,h)}(d)=
\e^{\i A_{(j,h)}d}\cdot\sinh\kappa d\cdot\left[\sum_{j'=1}^{n}\lambda^d_{(j,h)^0j'}(k^2)\int_{0}^{+\infty}\e^{-\kappa
y}\psi_{j'}(y)\,\D y \right.\\
+\sum_{l'=1}^{n}\sum_{h'\in N_{l'}}\left(\lambda^d_{(j,h)^0(l'h')^0}(k^2)\int_{0}^{d}\e^{-\i A_{(l',h')}\tilde{y}}\sinh\kappa
\tilde{y}\psi_{(l',h')}(\tilde{y})\,\D\tilde{y}\right. \\
+\left.\left.\lambda^d_{(j,h)^0(l'h')^d}(k^2)\int_{0}^{d}\e^{-\i
A_{(l',h')}\tilde{y}}\sinh\kappa
(d-\tilde{y})\psi_{(l',h')}(\tilde{y})\,\D\tilde{y}\right)
\right]\,.
\end{multline}
The relation~\eqref{Krok3} should be satisfied for any choice of
$\Psi=(\psi_J)_{J\in\mathcal{I}}$, hence we obtain the
coefficients $\lambda^d_{(j,h)^0J}(k^2)$ with $J\in\mathcal{I}$
expressed in terms of $\lambda^d_{jJ}(k^2)$ in the following way
\begin{subequations}\label{VyslKrok3}
\begin{align}
\lambda^d_{(j,h)^0j'}(k^2)&=\frac{1}{\sinh\kappa d}
\,\e^{-\i dA_{(j,h)}}\cdot\lambda^d_{jj'}(k^2) &\quad \forall j'\in\hat{n}\,, \\
\lambda^d_{(j,h)^0(l',h')^0}(k^2)&=\frac{1}{\sinh\kappa d}
\,\e^{-\i dA_{(j,h)}}\cdot\lambda^d_{j(l',h')^0}(k^2) &\quad \forall l'\in\hat{n}, h'\in N_{l'}\,, \\
\lambda^d_{(j,h)^0(l',h')^d}(k^2)&=\frac{1}{\sinh\kappa
d}\,\e^{-\i dA_{(j,h)}}\cdot\lambda^d_{j(l',h')^d}(k^2) &\quad
\forall l'\in\hat{n}, h'\in N_{l'}
\end{align}
\end{subequations}
for $j\in\hat{n}, h\in N_j$. We also return to the result of Step
2 -- we substitute there for $\lambda^d_{(j,h)^0J}(k^2)$ the
expressions that we have just obtained arriving thus at
\begin{subequations}\label{VyslKrok32}
\begin{equation}
\lambda^d_{(l,h)^dj'}(k^2)=\frac{\kappa}{2\kappa\cosh\kappa d+w_{\{l,h\}}
\sinh\kappa d}\cdot\frac{1}{\sinh\kappa d}\cdot\left(\e^{-\i dA_{(l,h)}}\lambda^d_{lj'}(k^2)+\e^{-\i dA_{(h,l)}}\lambda^d_{hj'}(k^2)\right) \quad
\forall j'\in\hat{n}\,,
\end{equation}
\begin{multline}
\lambda^d_{(l,h)^d(l',h')^0}(k^2)=\frac{\kappa}{2\kappa\cosh\kappa d+w_{\{l,h\}}\sinh\kappa d}\cdot\frac{1}{\sinh\kappa d}
\cdot\left(\e^{-\i dA_{(l,h)}}\lambda^d_{l(l',h')^0}(k^2)+\e^{-\i dA_{(h,l)}}\lambda^d_{h(l',h')^0}(k^2)\right) \\
\forall l'\in\hat{n}, h'\in N_{l'}\,,
\end{multline}
\begin{multline}
\lambda^d_{(l,h)^d(l',h')^d}(k^2)=
\frac{\kappa}{2\kappa\cosh\kappa d+w_{\{l,h\}}\sinh\kappa d}\cdot\frac{1}{\sinh\kappa d}\cdot \\
\cdot\left(\e^{-\i dA_{(l,h)}}\cdot\lambda^d_{l(l',h')^d}(k^2)+\e^{-\i dA_{(h,l)}}\cdot\lambda^d_{h(l',h')^d}(k^2)+\frac{1}{\kappa}\delta_{(l,h)(l',h')}\right) \\
\forall l'\in\hat{n}, h'\in N_{l'}
\end{multline}
\end{subequations}
for $l\in\hat{n}, h\in N_l$.

\medskip

\noindent \textit{Step 4.} In this step we examine the sum of derivatives at the points
$V_j$, i.e. at the junctions of the halflines and connecting
segments. Since the connecting lines support constant vector
potentials, one has to rewrite the original condition into the
form derived in Corollary \ref{Transformovana delta}. Note that
the variable on the connecting segments is considered in the
ingoing sense, thus the sign of the potentials $A_{(j,h)}$ ($h\in
N_j$) has to be taken with the minus sign. The resulting condition
is
\begin{equation}\label{Krok4}
g_j'(0)-\sum_{h\in N_j}g'_{(j,h)}(d)=\left(v_j-\i\sum_{h\in N_j}A_{(j,h)}\right)\cdot g_j(0)
\end{equation}
for all $j\in\hat{n}$.

The way how to proceed in this step is essentially the same as in
previous steps, only the calculus is slightly longer. With the aid
of the formul\ae
\begin{gather*}
\left.\frac{\partial\GG_{\i\kappa}(x,y)}{\partial x}\right|_{x=0}=\e^{-\kappa y}\,, \\
\left.\frac{\partial\tilde{\GG}^{A}_{\i\kappa}(x,\tilde{y})}{\partial
x}\right|_{x=d}= -\frac{\e^{\i Ad}}{\sinh\kappa
d}\cdot\sinh\kappa\tilde{y} \:\e^{-\i A\tilde{y}}
\end{gather*}
and
\begin{gather*}
\left.\left(\e^{\i A x}\sinh\kappa x\right)'\right|_{x=d}
=\left(\kappa\cosh\kappa d+\i A\sinh\kappa d\right)\cdot\e^{\i Ad}\,, \\
\left.\left(\e^{\i A
x}\sinh\kappa(d-x)\right)'\right|_{x=d}=-\kappa\,\e^{\i Ad}\,,
\end{gather*}
used in Eq.~\eqref{Krok4}, we arrive at an expression containing
$\psi_1\ldots,\psi_n$ that should be satisfied for any choice of
$\Psi=(\psi_J)_{J\in\mathcal{I}}$. This yields the following three
groups of conditions:
\begin{subequations}
\begin{multline}
\delta_{jj'}-\kappa\lambda^d_{jj'}(k^2)-\sum_{h\in N_j}\left(\kappa\cosh\kappa d
+\i A_{(j,h)}\sinh\kappa d\right)\cdot\e^{\i dA_{(j,h)}}\lambda^d_{(j,h)^0j'}(k^2) \\
+\kappa\sum_{h\in N_j}\left(\e^{\i dA_{(j,h)}}\lambda^d_{(j,h)^dj'}(k^2)\right)=
\left(v_j-\i\sum_{h\in N_j}A_{(j,h)}\right)\lambda^d_{jj'}(k^2)\,,
\end{multline}
\begin{multline}
\delta_{jl'}\frac{\e^{\i dA_{(j,h')}}}{\sinh\kappa d}-\kappa\lambda^d_{j(l,h')^0}(k^2)
-\sum_{h\in N_j}\left(\kappa\cosh\kappa d+\i A_{(j,h)}\sinh\kappa d\right)
\cdot\e^{\i dA_{(j,h)}}\lambda^d_{(j,h)^0(l,h')^0}(k^2) \\
+\kappa\sum_{h\in N_j}\left(\e^{\i dA_{(j,h)}}\lambda^d_{(j,h)^d(l,h')^0}(k^2)\right)=
\left(v_j-\i\sum_{h\in N_j}A_{(j,h)}\right)\lambda^d_{j(l,h')^0}(k^2)\,,
\end{multline}
\begin{multline}
-\kappa\lambda^d_{j(l,h')^d}(k^2)-\sum_{h\in N_j}\left(\kappa\cosh\kappa d+\i A_{(j,h)}\sinh\kappa d\right)
\cdot\e^{\i dA_{(j,h)}}\lambda^d_{(j,h)^0(l,h')^d}(k^2) \\
+\kappa\sum_{h\in N_j}\left(\e^{\i dA_{(j,h)}}\lambda^d_{(j,h)^d(l,h')^d}(k^2)\right)=
\left(v_j-\i\sum_{h\in N_j}A_{(j,h)}\right)\lambda^d_{j(l,h')^d}(k^2)\,.
\end{multline}
\end{subequations}
We use the equalities~\eqref{VyslKrok3} and \eqref{VyslKrok32} to
eliminate all terms of the type $\lambda^d_{(j,h)^0J}(k^2)$ and
$\lambda^d_{(j,h)^dJ}(k^2)$, $J\in\hat{\mathcal{I}}$. In this way we
obtain three independent systems of equations for $\lambda^d_{jj'}(k^2)$
($j,j'\in\hat{n}$), $\lambda^d_{j(l',h')^0}(k^2)$
($j,l'\in\hat{n},h'\in N_{l'}$) and $\lambda^d_{j(l',h')^d}(k^2)$
($j,l'\in\hat{n},h'\in N_{l'}$):
\begin{subequations}
\begin{equation}\label{Kr4-}
\delta_{jj'}-\kappa\lambda^d_{jj'}(k^2)-\kappa\#N_j\frac{\cosh\kappa d}{\sinh\kappa d}\lambda^d_{jj'}(k^2)
+\sum_{h\in N_j}\frac{\kappa^2}{\sinh\kappa d}\cdot\frac{\lambda^d_{jj'}(k^2)
+\e^{2\i dA_{(j,h)}}\lambda^d_{hj'}(k^2)}{2\kappa\cosh\kappa d+w_{\{j,h\}}\sinh\kappa d}=v_j\lambda^d_{jj'}(k^2)\,,
\end{equation}
\begin{multline}\label{Kr40}
\delta_{jl'}\frac{\e^{\i dA_{(l',h')}}}{\sinh\kappa d}-\kappa\lambda^d_{j(l,h')^0}(k^2)
-\kappa\#N_j\frac{\cosh\kappa d}{\sinh\kappa d}\lambda^d_{j(l,h')^0}(k^2) \\
+\sum_{h\in N_j}\frac{\kappa^2}{\sinh\kappa d}\cdot\frac{\lambda^d_{j(l,h')^0}(k^2)
+\e^{2\i dA_{(j,h)}}\lambda^d_{h(l,h')^0}(k^2)}{2\kappa\cosh\kappa d+w_{\{j,h\}}\sinh\kappa d}=v_j\lambda^d_{j(l,h')^0}(k^2)\,,
\end{multline}
\begin{multline}\label{Kr4d}
-\kappa\lambda^d_{j(l,h')^d}(k^2)-\kappa\#N_j\frac{\cosh\kappa d}{\sinh\kappa d}\lambda^d_{j(l,h')^d}(k^2)
+\sum_{h\in N_j}\frac{\kappa^2}{\sinh\kappa d}\cdot\frac{\lambda^d_{j(l,h')^d}(k^2)
+\e^{2\i dA_{(j,h)}}\lambda^d_{h(l,h')^d}(k^2)}{2\kappa\cosh\kappa d+w_{\{j,h\}}\sinh\kappa d} \\
+\frac{\kappa}{\sinh\kappa d}\cdot\sum_{h\in N_j}\frac{\e^{\i dA_{(j,h)}}}{2\kappa\cosh\kappa d
+w_{\{j,h\}}\sinh\kappa d}\delta_{jl'}\delta_{hh'}=v_j\lambda^d_{j(l,h')^d}(k^2)\,.
\end{multline}
\end{subequations}
Let us focus, e.g., on Eq.~\eqref{Kr4-}, which can be rewritten in
the form
\begin{multline}\label{Kr4--}
\sum_{h=1}^n\left[\delta_{jh}\left(\kappa+\kappa\#N_j\frac{\cosh\kappa d}{\sinh\kappa d}
-\frac{\kappa}{\sinh\kappa d}\sum_{\tilde{h}\in N_j}
\frac{\kappa}{2\kappa\cosh\kappa d+w_{\{j,\tilde{h}\}}\sinh\kappa d}+v_j\right)\right.\\
\left.-\chi_{N_j}(h)\cdot\frac{\kappa}{\sinh\kappa
d}\cdot\frac{\e^{2\i dA_{(j,h)}}}{2\kappa\cosh\kappa d
+w_{\{j,h\}}\sinh\kappa d}\right]\lambda^d_{hj'}(k^2)=\delta_{jj'}
\end{multline}
for all $j,j'\in\hat{n}$; the symbol $\chi_{N_j}(h)$ is equal to
one if $h\in N_j$ holds and zero otherwise. As we will see within
a short time, it is convenient to introduce a matrix $M_d$ the
$(j,h)$-th element of which is defined by
\begin{multline}\label{M}
[M_d]_{jh}=\delta_{jh}\left(\kappa+\kappa\#N_j\frac{\cosh\kappa d}{\sinh\kappa d}
-\frac{\kappa}{\sinh\kappa d}\sum_{\tilde{h}\in N_j}\frac{\kappa}{2\kappa\cosh\kappa d
+w_{\{j,\tilde{h}\}}\sinh\kappa d}+v_j\right)\\
-\chi_{N_j}(h)\cdot\frac{\kappa}{\sinh\kappa d}\cdot\frac{\e^{2\i dA_{(j,h)}}}{2\kappa\cosh\kappa d+w_{\{j,h\}}\sinh\kappa d}\,.
\end{multline}
We also rewrite the set $\hat{\mathcal{I}}$ as a union,
$$
\hat{\mathcal{I}}=\hat{n}\cup \mathcal{J}^0\cup \mathcal{J}^d\,,
$$
where $\mathcal{J}^0=\left\{\left.(l,h)^0\right|\,l\in\hat{n},h\in
N_l\right\}$ and
$\mathcal{J}^d=\left\{\left.(l,h)^d\right|\,l\in\hat{n},h\in
N_l\right\}$, and define the symbols
$\Lambda^{\mathrm{Ag},d}_{XY}(k^2)$ for
$X,Y\in\left\{\hat{n},\mathcal{J}^0,\mathcal{J}^d\right\}$ by the
relation
$$
\Lambda^{\mathrm{Ag},d}_{XY}(k^2)=\left(\lambda^{d}_{JL}(k^2)\right)_{J\in
X,L\in Y}\,,
$$
e.g.
$\Lambda^{\mathrm{Ag},d}_{\hat{n}\mathcal{J}^0}(k^2)=\left(\lambda^{d}_{j(l',h')^0}(k^2)\right)_{j\in\hat{n},
(l',h')^0\in \mathcal{J}^0}$. Obviously, the matrix
$\Lambda^{\mathrm{Ag},d}(k^2)$ has the block structure
$$
\Lambda^{\mathrm{Ag},d}(k^2)=\left(\begin{array}{c|c|c}
\Lambda^{\mathrm{Ag},d}_{\hat{n}\hat{n}}(k^2) & \Lambda^{\mathrm{Ag},d}_{\hat{n}\mathcal{J}^0}(k^2)
& \Lambda^{\mathrm{Ag},d}_{\hat{n}\mathcal{J}^d}(k^2) \\
\hline
\Lambda^{\mathrm{Ag},d}_{\mathcal{J}^0\hat{n}}(k^2) & \Lambda^{\mathrm{Ag},d}_{\mathcal{J}^0\mathcal{J}^0}(k^2)
& \Lambda^{\mathrm{Ag},d}_{\mathcal{J}^0\mathcal{J}^d}(k^2) \\
\hline \Lambda^{\mathrm{Ag},d}_{\mathcal{J}^d\hat{n}}(k^2) &
\Lambda^{\mathrm{Ag},d}_{\mathcal{J}^d\mathcal{J}^0}(k^2) &
\Lambda^{\mathrm{Ag},d}_{\mathcal{J}^d\mathcal{J}^d}(k^2)
\end{array}\right)\,.
$$
We observe that the system of equations \eqref{Kr4--} is nothing
but
$$
M_d\Lambda^{\mathrm{Ag},d}_{\hat{n}\hat{n}}(k^2)=I\,,
$$
and therefore
$\Lambda^{\mathrm{Ag},d}_{\hat{n}\hat{n}}(k^2)=(M_d)^{-1}$, or in
the components
\begin{subequations}
\begin{equation}\label{lambda_nn}
\lambda^{d}_{jj'}(k^2)=\left[(M_d)^{-1}\right]_{jj'}\,.
\end{equation}
The matrices $\Lambda^{\mathrm{Ag},d}_{\hat{n}\mathcal{J}^0}(k^2)$
and $\Lambda^{\mathrm{Ag},d}_{\hat{n}\mathcal{J}^d}(k^2)$ can be
found in a similar way. We start from Eqs.~\eqref{Kr40} and
\eqref{Kr4d} and arrive at
\begin{equation}\label{lambda_n0}
\lambda^{d}_{j(l',h')^0}(k^2)=\frac{\e^{\i dA_{(l',h')}}}{\sinh\kappa d}\left[(M_d)^{-1}\right]_{jl'}
\end{equation}
and
\begin{equation}\label{lambda_nd}
\lambda^{d}_{j(l',h')^d}(k^2)=\frac{\kappa}{\sinh\kappa d}
\cdot\frac{\e^{\i dA_{(l',h')}}}{2\kappa\cosh\kappa d+w_{\{l',h'\}}
\sinh\kappa d}\left[(M_d)^{-1}\right]_{jl'}\,.
\end{equation}
To obtain expressions for
$\Lambda^{\mathrm{Ag},d}_{\mathcal{J}^0X}(k^2)$ and
$\Lambda^{\mathrm{Ag},d}_{\mathcal{J}^dX}(k^2)$
($X=\mathcal{J}^0,\mathcal{J}^d$) we substitute \eqref{lambda_nn},
\eqref{lambda_n0} and \eqref{lambda_nd} into Equations
\eqref{VyslKrok3} and \eqref{VyslKrok32} which gives
\begin{gather}
\lambda^d_{(l,h)^0j'}(k^2)=\frac{\e^{-\i dA_{(l,h)}}}{\sinh\kappa d}
\cdot\left[(M_d)^{-1}\right]_{lj'}\,, \label{lambda_0n}\\
\lambda^d_{(l,h)^0(l',h')^0}(k^2)=\frac{\e^{-\i dA_{(l,h)}}}{\sinh\kappa d}
\cdot\frac{\e^{\i dA_{(l',h')}}}{\sinh\kappa d}\left[(M_d)^{-1}\right]_{ll'}\,, \label{lambda_00} \\
\lambda^d_{(l,h)^0(l',h')^d}(k^2)=\frac{\e^{-\i dA_{(l,h)}}}{\sinh^2\kappa d}
\cdot\frac{\kappa\cdot\e^{\i dA_{(l',h')}}}{2\kappa\cosh\kappa d+w_{\{l',h'\}}\sinh\kappa d}\left[(M_d)^{-1}\right]_{ll'} \label{lambda_0d}\,,
\end{gather}
\begin{equation}\label{lambda_dn}
\lambda^d_{(l,h)^dj'}(k^2)=\frac{\kappa}{2\kappa\cosh\kappa d+w_{\{l,h\}}\sinh\kappa d}\cdot\frac{1}{\sinh\kappa d}
\cdot\left(\e^{-\i dA_{(l,h)}}\left[(M_d)^{-1}\right]_{lj'}+\e^{-\i dA_{(h,l)}}\left[(M_d)^{-1}\right]_{hj'}\right)\,,
\end{equation}
\begin{multline} \label{lambda_d0}
\lambda^d_{(l,h)^d(l',h')^0}(k^2)=\frac{\kappa}{2\kappa\cosh\kappa d+w_{\{l,h\}}\sinh\kappa d}\cdot\frac{\e^{\i dA_{(l',h')}}}{\sinh^2\kappa d}\cdot \\
\cdot\left(\e^{-\i dA_{(l,h)}}\cdot\left[(M_d)^{-1}\right]_{ll'}+\e^{-\i dA_{(h,l)}}\cdot\left[(M_d)^{-1}\right]_{hl'}\right)\,,
\end{multline}
\begin{multline} \label{lambda_dd}
\lambda^d_{(l,h)^d(l',h')^d}(k^2)=
\frac{\kappa}{2\kappa\cosh\kappa d+w_{\{l,h\}}\sinh\kappa d}\cdot\frac{1}{\sinh\kappa d}\cdot\\
\cdot\left[\frac{\kappa}{\sinh\kappa d}\cdot\frac{\e^{\i dA_{(l',h')}}}{2\kappa\cosh\kappa d+w_{\{l',h'\}}\sinh\kappa d}
\left(\e^{-\i dA_{(l,h)}}\cdot\left[(M_d)^{-1}\right]_{ll'}+\e^{-\i dA_{(h,l)}}
\cdot\left[(M_d)^{-1}\right]_{hl'}\right)+\frac{1}{\kappa}\delta_{(l,h)(l',h')}\right].
\end{multline}
\end{subequations}
Once we compute the elements of $(M_d)^{-1}$ explicitly, we will
have fully explicit formulae for $\Lambda^{\mathrm{Ag},d}(k^2)$ we
need. We start from the matrix $M_d$ itself. We take the
formula~\eqref{M}, substitute there the expressions for $v_j(d)$,
$w_{\{j,k\}}(d)$ and $A_{(j,k)}(d)$ that have been obtained
heuristically in the previous section and apply Taylor expansions
to appropriate orders. A slightly laborious calculation leads to
the formulae written below; note that the structure of the
expression for the \mbox{$(j,h)$-th} element of the matrix $M_d$
depends on whether $j,h$ belong to $\hat{m}$ or to
$\hat{n}\backslash\hat{m}$:
\begin{subequations}
\begin{equation}
[M_d]_{jh}=\delta_{jh}\left(\kappa+S_{jj}+\frac{1}{d}\sum_{l=m+1}^n T_{jl}\overline{T_{lj}}\right)+
S_{jh}+\frac{1}{d}\sum_{l=m+1}^n T_{jl}\overline{T_{hl}}+\O(d)\quad
\text{for}\quad j,h\in\hat{m}\,,
\end{equation}
\begin{equation}
[M_d]_{jh}=-\frac{1}{d}T_{jh}+\O(d) \quad\text{for}\quad j\in\hat{m}, h\geq m+1\,,
\end{equation}
\begin{equation}
[M_d]_{jh}=-\frac{1}{d}\overline{T_{hj}}+\O(d) \quad\text{for}\quad j\geq m+1, h\in\hat{m}\,,
\end{equation}
\begin{equation}
[M_d]_{jh}=\delta_{jh}\left(\kappa+\frac{1}{d}\right)+\O(d) \quad\text{for}\quad j,h\geq m+1\,.
\end{equation}
\end{subequations}
The matrix $M_d$ has thus the form
\begin{equation*}
M_d=\left(\begin{array}{c|c} S+\kappa I^{(m)}+\frac{1}{d}TT^* &
-\frac{1}{d}T \\ [.2em] \hline \\ [-.9em] -\frac{1}{d}T^* &
\left(\kappa+\frac{1}{d}\right) I^{(n-m)}
\end{array}\right)+\O(d)\,,
\end{equation*}
where $\O(d)$ on the eight-hand side represents a matrix $n\times
n$ the all entries of which are of order of $\O(d)$ as $d\to 0$.

Our aim is to find the inverse of $M_d$. For this purpose, we
denote the first term on the right-hand side, the principal one,
as $M_{d,P}$ and notice that if $M_{d,P}$ is regular, then
\begin{multline}\label{InverzePerturbovaneMatice}
[M_d]^{-1}=\left(M_{d,P}+\O(d)\right)^{-1}=\left[M_{d,P}(I+[M_{d,P}]^{-1}\O(d))\right]^{-1}=\\
=\left[I-[M_{d,P}]^{-1}\O(d)\right][M_{d,P}]^{-1}=[M_{d,P}]^{-1}-[M_{d,P}]^{-1}\O(d)[M_{d,P}]^{-1}\,.
\end{multline}
Moreover, if $[M_{d,P}]^{-1}=\O(1)$ it obviously holds
$[M_d]^{-1}=[M_{d,P}]^{-1}+\O(d)$; in other words, under certain
assumptions it suffices to find the inverse of $M_{d,P}$.

Since the matrix $M_{d,P}$ has a block structure, one can find
$M_{d,P}$ in the same block structure. This in other words means
that we are looking for a matrix $\left(\begin{array}{c|c}
N_1 & N_2 \\
\hline
N_3 & N_4
\end{array}\right)$
such that the relation
$$
\left(\begin{array}{c|c}
N_1 & N_2 \\
\hline
N_3 & N_4
\end{array}\right)\cdot
\left(\begin{array}{c|c}
S+\kappa I^{(m)}+\frac{1}{d}TT^* & -\frac{1}{d}T \\
\hline
-\frac{1}{d}T^* & \left(\kappa+\frac{1}{d}\right) I^{(n-m)}
\end{array}\right)=
\left(\begin{array}{c|c}
I^{(m)} & 0 \\
\hline
0 & I^{(n-m)}
\end{array}\right)
$$
holds true. It turns out that
\begin{multline*}
[M_{d,P}]^{-1}=\\
\left(\begin{array}{cc}
(S+\kappa I^{(m)}+\frac{\kappa}{1+\kappa d} TT^*)^{-1} &
\frac{1}{1+\kappa d}(S+\kappa I^{(m)}+\frac{\kappa}{1+\kappa d} TT^*)^{-1}T \\
\frac{1}{1+\kappa d} T^*(S+\kappa I^{(m)}+\frac{\kappa}{1+\kappa d} TT^*)^{-1} &
\frac{1}{(1+\kappa d)^2} T^*(S+\kappa I^{(m)}
+\frac{\kappa}{1+\kappa d} TT^*)^{-1}T+\frac{d}{1+\kappa d} I^{(n-m)}
\end{array}\right)
\end{multline*}
provided the matrix $S+\kappa I^{(m)}+\frac{\kappa}{1+\kappa d}
TT^*$ is regular.  Since $\frac{\kappa}{1+\kappa d}=\kappa+\O(d)$,
we may proceed in the same way as
in~\eqref{InverzePerturbovaneMatice}, and we obtain
\begin{equation*}
\left(S+\kappa I^{(m)}+\frac{\kappa}{1+\kappa d} TT^*\right)^{-1}
=(S+\kappa I^{(m)}+\kappa TT^*+\O(d))^{-1}
=(S+\kappa I^{(m)}+\kappa TT^*)^{-1}+\O(d)\,,
\end{equation*}
if the matrix $S+\kappa I^{(m)}+\kappa TT^*$ is regular. However,
the regularity of this matrix has been discussed and proven for an
appropriate $\kappa$ at the end of the part devoted to the
approximated system, see Eq.~\eqref{RegularitaMatice} and the
paragraph following it. It follows that $M_{d,P}$ is regular as
well, i.e. the condition on the regularity of $M_{d,P}$ in
\eqref{InverzePerturbovaneMatice} is satisfied. Hence
$$
[M_{d,P}]^{-1}=
\left(\begin{array}{cc}
(S+\kappa I^{(m)}+\kappa TT^*)^{-1} & (S+\kappa I^{(m)}+\kappa TT^*)^{-1}T \\
T^*(S+\kappa I^{(m)}+\kappa TT^*)^{-1} & T^*(S+\kappa I^{(m)}+\kappa TT^*)^{-1}T
\end{array}\right)+\O(d)\,,
$$
and together with \eqref{InverzePerturbovaneMatice} we have
\begin{equation*}
[M_d]^{-1}=
\left(\begin{array}{cc}
(S+\kappa I^{(m)}+\kappa TT^*)^{-1} & (S+\kappa I^{(m)}+\kappa TT^*)^{-1}T \\
T^*(S+\kappa I^{(m)}+\kappa TT^*)^{-1} & T^*(S+\kappa I^{(m)}+\kappa TT^*)^{-1}T
\end{array}\right)+\O(d)\,.
\end{equation*}
It is important to notice that
\begin{equation}\label{Rad invM}
[M_d]^{-1}=\O(1) \quad\text{for}\; d\to 0_+\,.
\end{equation}
Combining the above result with Eq.~\eqref{lambda_nn}, we can
conclude that
\begin{equation*}
\Lambda^{\mathrm{Ag},d}_{\hat{n}\hat{n}}(k^2)=
\left(\begin{array}{cc}
(S+\kappa I^{(m)}+\kappa TT^*)^{-1} & (S+\kappa I^{(m)}+\kappa TT^*)^{-1}T \\
T^*(S+\kappa I^{(m)}+\kappa TT^*)^{-1} & T^*(S+\kappa I^{(m)}+\kappa TT^*)^{-1}T
\end{array}\right)+\O(d)\,,
\end{equation*}
hence
\begin{equation}\label{rozdil lambd n,n}
\Lambda^{\mathrm{Ag},d}_{\hat{n}\hat{n}}(k^2)=\Lambda^\mathrm{Ad}(k^2)+\O(d)\,.
\end{equation}

Having the coefficient matrix we can determine the resolvent
kernel. First we introduce symbol
$\mathcal{J}=\left\{\left.(l,h)\right|\,l\in\hat{n},h\in
N_l\right\}$ (i.e. $\mathcal{I}=\hat{n}\cup\mathcal{J}$), then we
employ a notation similar to the case of the matrix
$\Lambda^{\mathrm{Ag},d}(k^2)$ and its submatrices. We introduce
symbols $\GG^{\mathrm{Ag},d}_{XY,k}$ for any pair
$X,Y\in\left\{\hat{n},\mathcal{J}\right\}$ to denote the blocks
$\GG^{\mathrm{Ag},d}_{k,XY}=\left(\GG^{\mathrm{Ag},d}_{k,JL}\right)_{J\in
X,L\in Y}$; then the integral kernel $\GG^{\mathrm{Ag},d}_k$ of
$R^\mathrm{Ag}_d(k^2)$ has the structure
\begin{equation}\label{G Ag,d}
\GG^{\mathrm{Ag},d}_k(x,y)=\left(\begin{array}{c|c}
\GG^{\mathrm{Ag},d}_{k,\hat{n}\hat{n}}(x,y) & \GG^{\mathrm{Ag},d}_{k,\hat{n}\mathcal{J}}(x,y) \\
\hline \GG^{\mathrm{Ag},d}_{k,\mathcal{J}\hat{n}}(x,y) &
\GG^{\mathrm{Ag},d}_{k,\mathcal{J}\mathcal{J}}(x,y)
\end{array}\right)
\end{equation}
for $x,y\in G_d$. Using \eqref{G Ad,d} we can write the difference
in question as
\begin{equation}\label{rozdil G}
\GG^{\mathrm{Ag},d}_{\i\kappa}-\GG^{Ad,d}_{\i\kappa}=
\left(\begin{array}{c|c}
\GG^{\mathrm{Ag},d}_{\i\kappa,\hat{n}\hat{n}}
-\GG^\mathrm{Ad}_{\i\kappa} & \GG^{\mathrm{Ag},d}_{\i\kappa,\hat{n}\mathcal{J}} \\
\hline \GG^{\mathrm{Ag},d}_{\i\kappa,\hat{n}\mathcal{J}} &
\GG^{\mathrm{Ag},d}_{\i\kappa,\mathcal{J}\mathcal{J}}
\end{array}\right)\,.
\end{equation}

\medskip

\noindent \textit{III. Comparison of the resolvents}

\medskip

\noindent To make use of the above results we compute first
explicit expressions for all the entries of
$\GG^{\mathrm{Ag},d}_k(x,y)$, up to the error term in the lambda
coefficients indicated in \eqref{rozdil lambd n,n}. They may be
derived from Eqs~\eqref{g aproximujici} together with~\eqref{phi}:
\begin{subequations}
\begin{equation}\label{G Ag,d n n}
\GG^{\mathrm{Ag},d}_{\i\kappa,jj'}(x_j,y_{j'})=
\delta_{jj'}\frac{\sinh\kappa x_<\: \e^{-\kappa x_>}}{\kappa}
+\lambda^d_{jj'}(k^2)\ \e^{-\kappa x_j}\e^{-\kappa y_{j'}}
\quad\text{for}\quad j,j'\in\hat{n}\,,
\end{equation}
\begin{multline}\label{G Ag,d n J}
\GG^{\mathrm{Ag},d}_{\i\kappa,j(l',h')}(x_j,y_{(l',h')})
=\e^{-\kappa x_j}\cdot\e^{-\i A_{(l',h')}y_{(l',h')}}\cdot\left[\lambda^d_{j(l',h')^0}(k^2)\sinh\kappa y_{(l',h')}
+\lambda^d_{j(l',h')^d}(k^2)\sinh\kappa(d-y_{(l',h')})\right]\\
\text{for}\quad j\in\hat{n}\,,(l',h')\in\mathcal{J}\,,
\end{multline}
\begin{multline}\label{G Ag,d J n}
\GG^{\mathrm{Ag},d}_{\i\kappa,(l,h)j'}(x_{(l,h)},y_{j'})
=\e^{\i A_{(l,h)}x_{(l,h)}}\cdot\left[\lambda^d_{(l,h)^0j'}(k^2)\sinh\kappa x_{(l,h)}
+\lambda^d_{(l,h)^dj'}(k^2)\sinh\kappa(d-x_{(l,h)})\right]\cdot\e^{-\kappa y_{j'}}\\
\text{for}\quad (l,h)\in\mathcal{J}\,,j'\in\hat{n}\,,
\end{multline}
\begin{multline}\label{G Ag,d J J}
\GG^{\mathrm{Ag},d}_{\i\kappa,(l,h)(l',h')}(x_{(l,h)},y_{(l',h')})
=\delta_{(l,h)(l',h')}\e^{\i A_{(l,h)}x_{(l,h)}}\frac{\sinh\kappa x_< \sinh\kappa(d-x_>)}
{\kappa\sinh\kappa d}\ \e^{-\i A_{(l',h')}y_{(l',h')}}\\
+\e^{\i A_{(l,h)}x_{(l,h)}}\cdot\sinh\kappa x_{(l,h)}\cdot\e^{-\i A_{(l',h')}y_{(l',h')}}
\cdot\left[\lambda^d_{(l,h)^0(l'h')^0}\sinh\kappa y_{(l',h')}
+\lambda^d_{(l,h)^0(l'h')^d}\sinh\kappa(d-y_{(l',h')})\right]\\
+\e^{\i A_{(l,h)}x_{(l,h)}}\cdot\sinh\kappa(d-x_{(l,h)})\cdot\e^{-\i A_{(l',h')}y_{(l',h')}}
\cdot\left[\lambda^d_{(l,h)^d(l'h')^0}\sinh\kappa
y_{(l',h')}+\lambda^d_{(l,h)^d(l'h')^d}\sinh\kappa
(d-y_{(l',h')})\right] \\
\text{for}\quad (l,h),(l',h')\in\mathcal{J}\,.
\end{multline}
\end{subequations}

Now we are able to compare the entries of
$\GG^{\mathrm{Ag},d}_k(x,y)$ given by~\eqref{G Ag,d} and
$\GG^{Ad,d}_k(x,y)$ as specified in~\eqref{G Ad,d}). We begin with
the upper left submatrix $n\times n$ of~\eqref{rozdil G}. From the
expressions for $\GG^\mathrm{Ad}_{\i\kappa}$ and
$\GG^{\mathrm{Ag},d}_{\i\kappa,\hat{n}\hat{n}}$, cf. \eqref{G Ad}
and \eqref{G Ag,d n n}, we have
\begin{multline}
\left[\GG^{\mathrm{Ag},d}_{\i\kappa,\hat{n}\hat{n}}-\GG^\mathrm{Ad}_{\i\kappa}\right]_{jj'}(x_j,y_{j'})\\
=\delta_{jj'}\frac{\sinh\kappa x_<\: \e^{-\kappa x_>}}{\kappa}
+\lambda^d_{jj'}(k^2)\ \e^{-\kappa x_j}\e^{-\kappa y_{j'}}
-\left[\delta_{jl}\frac{\sinh\kappa x_<\: \e^{-\kappa x_>}}{\kappa}
+\lambda_{jj'}(k^2)\ \e^{-\kappa x_j}\e^{-\kappa y_{j'}}\right]\\
=\left(\lambda^d_{jj'}(k^2)-\lambda_{jj'}(k^2)\right)\ \e^{-\kappa x_j}\e^{-\kappa y_{j'}}
=\left[\Lambda^{\mathrm{Ag},d}_{\hat{n}\hat{n}}(k^2)-\Lambda^\mathrm{Ad}(k^2)\right]_{jj'}\ \e^{-\kappa x_j}\e^{-\kappa y_{j'}}
=\O(d)\ \e^{-\kappa x_j}\e^{-\kappa y_{j'}}\,,
\end{multline}
where the last equality holds by virtue of \eqref{rozdil lambd
n,n}. Since such estimate is valid for all $j,j'\in\hat{n}$, there
is a constant $K_1$ independent of $j,j'$ and $d$ such that
\begin{equation}\label{K1}
\left|\GG^{\mathrm{Ag},d}_{\i\kappa,jj'}(x_j,y_{j'})-\GG^\mathrm{Ad}_{\i\kappa,jj'}(x_j,y_{j'})\right|
<K_1\,d\:\e^{-\kappa x_j}\e^{-\kappa y_{j'}}
\end{equation}
holds for all $j,j'\in\hat{n}$, $x_j,y_{j'}\in\R^+$ and any $d$
sufficiently small.

Then we proceed to the upper right submatrix of~\eqref{rozdil G}.
To find a bound for the entries of
$\GG^{\mathrm{Ag},d}_{k,\hat{n}\mathcal{J}}(x,y)$, cf.~\eqref{G
Ag,d n J}, we substitute values of $\lambda^d_{j(l',h')^0}(k^2)$ and
$\lambda^d_{j(l',h')^d}(k^2)$ that we have obtained in
\eqref{lambda_n0} and \eqref{lambda_nd}:
\begin{multline*}
\GG^{\mathrm{Ag},d}_{\i\kappa,j(l',h')}(x_j,y_{(l',h')})=\\
\e^{-\kappa x_j}\cdot\e^{-\i A_{(l',h')}y_{(l',h')}}\cdot\e^{\i dA_{(l',h')}}
\cdot\left[\frac{\sinh\kappa y_{(l',h')}}{\sinh\kappa d}+
\frac{\kappa}{\sinh\kappa d}\cdot\frac{\sinh\kappa(d-y_{(l',h')})}{2\kappa\cosh\kappa d
+w_{\{l',h'\}}\sinh\kappa d}\right]\cdot\left[(M_d)^{-1}\right]_{jl'}\,.
\end{multline*}
It holds $\left[(M_d)^{-1}\right]_{jl'}=\O(1)$ by virtue
of~\eqref{Rad invM} and obviously $\left|\e^{-\i
A_{(l',h')}y_{(l',h')}}\right|=\left|\e^{\i
dA_{(l',h')}}\right|=1$, thus it suffices to estimate the terms in
the brackets. When $d$ is sufficiently small, it holds
$\left|\frac{\sinh\kappa y_{(l',h')}}{\sinh\kappa d}\right|<1$,
because $0<y_{(l',h')}<d$; similarly
$\left|\frac{\sinh\kappa(d-y_{(l',h')})}{\sinh\kappa d}\right|<1$.
As for the denominator of the second term, we substitute for
$w_{\{l',h'\}}$ from \eqref{w m,m} or \eqref{w m,n-m}, depending
on whether both $l',h'$ belong to $\hat{m}$ or not, and we easily
obtain the estimate
$$
\frac{1}{2\kappa\cosh\kappa d+w_{\{l',h'\}}\sinh\kappa d}=\O(1)\,.
$$
Summing all this up, we get
$$
\GG^{\mathrm{Ag},d}_{\i\kappa,j(l',h')}(x_j,y_{(l',h')})=\e^{-\kappa
x_j}\left(\O(1)+\O(1)\right)=\e^{-\kappa x_j}\O(1)
$$
independently of $j$, $(l',h')$ and $x,y$, thus there is a
constant $K_2$ independent of $d$ such that
\begin{equation}\label{K2}
\left|\GG^{\mathrm{Ag},d}_{\i\kappa,j(l',h')}(x_j,y_{(l',h')})\right|<K_2\,\e^{-\kappa
x_j}
\end{equation}
for all $j\in\hat{n}$, $(l',h')\in\mathcal{J}$, $x_j\in\R^+$,
$\,y_{(l',h')}\in(0,d)$ and $d$ sufficiently small.

Similarly we proceed in the case of the left and right bottom
submatrices of~\eqref{rozdil G}, i.e. when estimating the entries
of $\GG^{\mathrm{Ag},d}_{k,\mathcal{J}\hat{n}}(x,y)$ and
$\GG^{\mathrm{Ag},d}_{k,\mathcal{J}\mathcal{J}}(x,y)$. As for
$\GG^{\mathrm{Ag},d}_{k,\mathcal{J}\hat{n}}(x,y)$, we substitute
for $\lambda^d_{(l,h)^0j'}(k^2)$ and $\lambda^d_{(l,h)^dj'}(k^2)$ from
\eqref{lambda_0n} and \eqref{lambda_dn} into \eqref{G Ag,d J n}
and obtain
\begin{multline*}
\GG^{\mathrm{Ag},d}_{\i\kappa,(l,h)j'}(x_{(l,h)},y_{j'})=
\e^{-\kappa y_{j'}}\cdot\e^{\i A_{(l,h)}x_{(l,h)}}
\cdot\left[\frac{\e^{-\i dA_{(l,h)}}}{\sinh\kappa d}\cdot\left[(M_d)^{-1}\right]_{lj'}\sinh\kappa x_{(l,h)} \right.\\
+\frac{\kappa}{2\kappa\cosh\kappa d+w_{\{l,h\}}\sinh\kappa d}\cdot\frac{1}{\sinh\kappa d}\cdot\sinh\kappa(d-x_{(l,h)})\cdot\\
\left.\cdot\left(\e^{-\i dA_{(l,h)}}\left[(M_d)^{-1}\right]_{lj'}+\e^{-\i dA_{(h,l)}}\left[(M_d)^{-1}\right]_{hj'}\right)\right]\,.
\end{multline*}
Using analogous estimates as in the case of
$\GG^{\mathrm{Ag},d}_{\i\kappa,j(l',h')}(x,y)$ above, we obtain
$$
\GG^{\mathrm{Ag},d}_{\i\kappa,(l,h)j'}(x_{(l,h)},y_{j'})=\e^{-\kappa
y_{j'}}\left(\O(1)+\O(1)\right)=\e^{-\kappa y_{j'}}\O(1)\,,
$$
thus there is a constant $K_3$ independent of $d$ such that
\begin{equation}\label{K3}
\left|\GG^{\mathrm{Ag},d}_{\i\kappa,(l,h)j'}(x_{(l,h)},y_{j'})\right|<K_3\,\e^{-\kappa
y_{j'}}
\end{equation}
for all $(l,h)\in\mathcal{J}$, $j'\in\hat{n}$,
$x_{(l,h)}\in(0,d)$, $y\in\R^+$ and $d$ sufficiently small.

Finally, we substitute from \eqref{lambda_00}, \eqref{lambda_0d},
\eqref{lambda_d0} and \eqref{lambda_dd} for
$\lambda^d_{(l,h)^0(l'h')^0}(k^2)$, $\lambda^d_{(l,h)^0(l'h')^d}(k^2)$,
$\lambda^d_{(l,h)^d(l'h')^0}(k^2)$ and $\lambda^d_{(l,h)^d(l'h')^d}(k^2)$,
respectively, into Eq.~\eqref{G Ag,d J J} and obtain
\begin{multline}
\GG^{\mathrm{Ag},d}_{\i\kappa,(l,h)(l',h')}(x_{(l,h)},y_{(l',h')})
=\delta_{(l,h)(l',h')}\e^{\i A_{(l,h)}x_{(l,h)}}\frac{\sinh\kappa x_< \sinh\kappa(d-x_>)}
{\kappa\sinh\kappa d}\ \e^{-\i A_{(l',h')}y_{(l',h')}}\\
+\e^{\i A_{(l,h)}x_{(l,h)}}\cdot\e^{-\i A_{(l',h')}y_{(l',h')}}
\cdot\e^{-\i dA_{(l,h)}}\cdot\e^{\i dA_{(l',h')}}\cdot\frac{\sinh\kappa x_{(l,h)}}{\sinh\kappa d}\cdot\\
\cdot\left[\frac{\sinh\kappa y_{(l',h')}}{\sinh\kappa d}
+\frac{\kappa}{\sinh\kappa d}\cdot\frac{\sinh\kappa(d-y_{(l',h')})}{2\kappa\cosh\kappa d+w_{\{l',h'\}}\sinh\kappa d}\right]
\cdot\left[(M_d)^{-1}\right]_{ll'}\\
+\e^{\i A_{(l,h)}x_{(l,h)}}\cdot\sinh\kappa(d-x_{(l,h)})
\cdot\e^{-\i A_{(l',h')}y_{(l',h')}}\cdot\frac{\kappa}{2\kappa\cosh\kappa d+w_{\{l,h\}}\sinh\kappa d}\cdot\frac{1}{\sinh\kappa d}\cdot\\
\cdot\left[\e^{\i dA_{(l',h')}}\cdot\frac{\sinh\kappa y_{(l',h')}}
{\sinh\kappa d}\left(\e^{-\i dA_{(l,h)}}\cdot\left[(M_d)^{-1}\right]_{ll'}
+\e^{-\i dA_{(h,l)}}\cdot\left[(M_d)^{-1}\right]_{hl'}\right) \right.\\
+\e^{\i dA_{(l',h')}}\cdot\frac{\kappa}{\sinh\kappa d}\cdot\frac{\sinh\kappa
(d-y_{(l',h')})}{2\kappa\cosh\kappa d+w_{\{l',h'\}}\sinh\kappa d}\cdot\\
\left.\cdot\left(\e^{-\i dA_{(l,h)}}\cdot\left[(M_d)^{-1}\right]_{ll'}
+\e^{-\i dA_{(h,l)}}\cdot\left[(M_d)^{-1}\right]_{hl'}\right)+\frac{1}{\kappa}\delta_{(l,h)(l',h')}\right]\,.
\end{multline}
It obviously holds
\begin{equation*}
\GG^{\mathrm{Ag},d}_{\i\kappa,(l,h)(l',h')}(x_{(l,h)},y_{(l',h')})=\\
=\O(d)+\O(1)\cdot\left[\O(1)+\O(1)\right]+\O(1)\cdot\left[\O(1)+\O(1)\right]=\O(1)\,,
\end{equation*}
thus there is a constant $K_4$ independent of $d$ such that
\begin{equation}\label{K4}
\left|\GG^{\mathrm{Ag},d}_{\i\kappa,(l,h)(l',h')}(x_{(l,h)},y_{(l',h')})\right|<K_4
\end{equation}
for all $(l,h),(l',h')\in\mathcal{J}$,
$x_{(l,h)},y_{(l',h')}\in(0,d)$ and any $d$ sufficiently small.

With the help of \eqref{K1}, \eqref{K2}, \eqref{K3} and
\eqref{K4}, we may now estimate all the entries of \eqref{rozdil
G}, which will allows us to assess the Hilbert-Schmidt norm of the
resolvent difference for the operators $H^\mathrm{Ad}_d$ and
$H^\mathrm{Ag}_d$. This norm can be written explicitly as follows,
\begin{multline*}
\left\|R^\mathrm{Ag}_d(k^2)-R^\mathrm{Ad}_d(k^2)\right\|_2^2
=\sum_{j,j'=1}^{n}\int^{+\infty}_{0}\int_{0}^{+\infty}\left|\GG^{\mathrm{Ag},d}_{\i\kappa,jj'}(x_j,y_{j'})
-\GG^\mathrm{Ad}_{\i\kappa,jj'}(x_j,y_{j'})\right|^2\,\D x_j\D y_{j'}\\
+\sum_{j=1}^{n}\sum_{(l',h')\in\mathcal{I}}
\int_0^{+\infty}\int_0^d\left|\GG^{\mathrm{Ag},d}_{\i\kappa,j(l',h')}(x_j,y_{(l',h')})\right|^2\,\D x_j\D y_{(l',h')}\\
+\sum_{(l,h)\in\mathcal{I}}
\sum_{j'=1}^{n}\int_0^d\int_0^{+\infty}\left|\GG^{\mathrm{Ag},d}_{\i\kappa,(l,h)j'}(x_{(l,h)},y_{j'})\right|^2\,\D x_{(l,h)}\D y_{j'}\\
+\sum_{(l,h)\in\mathcal{I}}
\sum_{(l',h')\in\mathcal{I}}\int_0^d\int_0^d\left|\GG^{\mathrm{Ag},d}_{\i\kappa,(l,h)(l'h')}(x_{(l,h)},y_{(l,h')})\right|^2\,\D
x_{(l,h)}\D y_{(l',h')}\,.
\end{multline*}
Now we employ the estimates derived above obtaining
\begin{multline*}
\left\|R^\mathrm{Ag}_d(k^2)-R^\mathrm{Ad}_d(k^2)\right\|_2^2\leq\\
\sum_{j,j'=1}^{n}\int^{+\infty}_{0}\int_{0}^{+\infty}\left|K_1\,d\:\e^{-\kappa x_j}\e^{-\kappa y_{j'}}\right|^2\,\D x_j\D y_{j'}
+\sum_{j=1}^{n}\sum_{(l',h')\in\mathcal{I}}\int_0^{+\infty}\int_0^d\left|K_2\,\e^{-\kappa x_j}\right|^2\,\D x_j\D y_{(l',h')}\\
+\sum_{(l,h)\in\mathcal{I}}\sum_{j'=1}^{n}\int_0^d\int_0^{+\infty}\left|K_3\,\e^{-\kappa y_{j'}}\right|^2\,\D x_{(l,h)}\D y_{j'}
+\sum_{(l,h)\in\mathcal{I}}\sum_{(l',h')\in\mathcal{I}}\int_0^d\int_0^d\left|K_4\right|^2\,\D
x_{(l,h)}\D y_{(l',h')}
\end{multline*}
\begin{gather*}
\leq\sum_{j,j'=1}^{n}K_1^2d^2\:\int_0^{+\infty}e^{-2(\Re\kappa)x_j}\,\D x_j\int_0^{+\infty}e^{-2(\Re\kappa)y_{j'}}\,\D y_{j'}\\
+\sum_{j=1}^{n}\sum_{(l',h')\in\mathcal{I}}K_2^2\int_0^{+\infty}e^{-2(\Re\kappa)x_j}\,\D x_j\cdot\int_0^d 1\,\D y_{(l',h')}\\
+\sum_{(l,h)\in\mathcal{I}}\sum_{j'=1}^{n}K_3^2\int_0^d 1\,\D x_{(l,h)}\cdot\int_0^{+\infty}e^{-2(\Re\kappa)y_{j'}}\,\D y_{j'}\\
+\sum_{(l,h)\in\mathcal{I}}\sum_{(l',h')\in\mathcal{I}}K_4^2\int_0^d
1\,\D x_{(l,h)}\cdot\int_0^d 1\,\D y_{(l',h')}
\end{gather*}
\begin{multline*}
=\sum_{j,j'=1}^{n}K_1^2d^2\frac{1}{(2\Re\kappa)^2}+\sum_{j=1}^{n}\sum_{(l',h')\in\mathcal{I}}K_2^2\frac{1}{2\Re\kappa}\cdot d
+\sum_{(l,h)\in\mathcal{I}}\sum_{j'=1}^{n}K_3^2 d\cdot\frac{1}{2\Re\kappa}+\sum_{(l,h)\in\mathcal{I}}\sum_{(l',h')\in\mathcal{I}}K_4^2 d^2\\
=\O(d)\,.
\end{multline*}
Hence
$$
\left\|R^\mathrm{Ag}_d(k^2)-R^\mathrm{Ad}_d(k^2)\right\|_2=\O\left(\sqrt{d}\right)
\qquad\text{for}\ d\to0_+\,,
$$
and consequently, the Hilbert-Schmidt norm of the difference
$R^\mathrm{Ag}_d(k^2)-R^\mathrm{Ad}_d(k^2)$ tends to zero as
$d\to0_+$ with the explicit convergence rate. Since the HS norm
dominates the operator one, it follows immediately
$$
\lim_{d\to 0_+}
\left\|R^\mathrm{Ag}_d(k^2)-R^\mathrm{Ad}_d(k^2)\right\|=0\,,
$$
therefore the resolvent difference tends to zero in $L^2(G_d)$ as
$d\to 0_+$, which we set out to prove.
\end{proof}

\section*{Acknowledgments}

P.E. is grateful for the hospitality extended to him at the Kochi
University of Technology where the idea of the approximation was
formulated. The research was supported by the Czech Ministry of
Education, Youth and Sports within the project LC06002 and by the 
Japanese Ministry of Education, Culture, Sports, Science and Technology under the Grant number 21540402.


\end{document}